\newcommand{\RNum}[1]{\uppercase\expandafter{\romannumeral #1\relax}}
\newcommand{\RomanNumeralCaps}[1]
    {\MakeUppercase{\romannumeral #1}}
\newcommand{\Pa}{\mathbb{P}_{\hspace*{-.1em}\alpha}}
\newcommand{\Paa}{\mathbb{P}_{\hspace*{-.1em}\tilde\alpha}}
\definecolor{dblue}{rgb}{0.21,0.21,0.55}
\definecolor{light-blue}{rgb}{0.63, 0.79, 0.95}
\newcommand{\dbl}{\color{dblue}}
\definecolor{darkbrown}{rgb}{0.8, 0.58, 0.46}
\renewcommand{\P}{\mathbb{P}}
\newcommand{\E}{\mathbb{E}}
\newcommand{\N}{\mathbb{N}}
\newcommand{\R}{\mathbb{R}}
\newcommand{\1}{\mathbbm{1}}
\newcommand{\todl}{\stackrel{d}{\longrightarrow}}
\newcommand{\KLEINO}{{\scriptstyle{\mathcal{O}}}}
\DeclareMathAccent{\verywidehat}{\mathord}{largesymbols}{'144}
\newcommand{\cov}{\mathbb{C}\textnormal{o\hspace*{0.02cm}v}}
\newtheorem{theo}{Theorem}
\newtheorem{assump}{Assumption}
\newtheorem{prop}{Proposition}[section]
\newtheorem{lem}{Lemma}
\begin{document}

\title{Probabilistic models and statistics for electronic financial markets in the digital age.}
\author{Markus Bibinger\footnote{Financial support from the Deutsche Forschungsgemeinschaft (DFG) under grant 403176476 is gratefully acknowledged.}\\
\emph{\small Faculty of Mathematics and Computer Science,}\\ \emph{\small Institute of Mathematics},\\ \emph{\small University of W\"urzburg}}
\date{Draft \today}
%
\maketitle\thispagestyle{empty}
\begin{abstract}
{{\normalsize \noindent 
The scope of this manuscript is to review some recent developments in statistics for discretely observed semimartingales which are motivated by applications for financial markets. Our journey through this area stops to take closer looks at a few selected topics discussing recent literature. We moreover highlight and explain the important role played by some classical concepts of probability and statistics. We focus on three main aspects: Testing for jumps; rough fractional stochastic volatility; and limit order microstructure noise. We review jump tests based on extreme value theory and complement the literature proposing new statistical methods. They are based on asymptotic theory of order statistics and the R\'{e}nyi representation. The second stage of our journey visits a recent strand of research showing that volatility is rough. We further investigate this and establish a minimax lower bound exploring frontiers to what extent the regularity of latent volatility can be recovered in a more general framework. Finally, we discuss a stochastic boundary model with one-sided microstructure noise for high-frequency limit order prices and its probabilistic and statistical foundation. }}
\vspace{.25cm} 

\noindent Keywords: High-frequency data; jump tests; limit order book; market microstructure; rough volatility.\\ 
MSC classification: Primary 62M10; Secondary 60J65; 60F05
%
\end{abstract}
\section{Introduction}

The evolutions of stock prices are subject to market risk. Foundations of price models in continuous time typically refer back to Louis Bachelier. He did his PhD supervised by Henri Poincar\'{e} in Paris and defended his thesis ``Th\'{e}orie de la sp\'{e}culation'' in 1900. He is considered to be the first researcher who found the \emph{Brownian motion}. I would see this as a great success, although the moral that \emph{price changes cannot be forecasted} (Efficient Market Hypothesis) is rather negative for speculators. Naturally, forecasting of future prices is a worthwhile pursuit. The first Bachelor student I have supervised asked me about methods to do that, but I rather pointed him at \emph{risk forecasting}. Looking at data from the DAX he concluded that ``One can see clearly that there is a much higher autocorrelation for squared returns than for returns. This is an indication that it might be easier to forecast variance of returns than the returns themselves''. 
While forecasting price changes (=returns) would be desirable to make money, forecasting risk is more successful and an integral instrument of risk management. This is a main application of statistics for financial markets and the analysis of financial time series. 

The Brownian motion, or \emph{Wiener process}, $(W_t)$ with $W_0=0$ is defined by the properties:
\begin{itemize}
\setlength{\itemsep}{.05cm}
\item Its increments $(W_{t_2}-W_{t_1},W_{t_3}-W_{t_2},\ldots,W_{t_{n}}-W_{t_{n-1}})$
are independent for all $0< t_1\le t_2\le \cdots\le t_n<\infty$;
\item Its increments are stationary, such that $(W_{t+s}-W_t)$ is distributed as $W_s$;
\item The expectation is $\E[W_t]=0$, for all $t$;
\item The \emph{paths} $t\mapsto W_t$ (=realizations) are continuous.
\end{itemize}
All random objects throughout the manuscript are defined on some probability space $(\Omega,\mathcal{F},\P)$, with $\sigma$-field $\mathcal{F}$ and measure $\P$. We follow the standard notation not to write arguments $\omega\in\Omega$ for random objects.
Brownian motion can be motivated as integrated continuous-time white noise, and as well as limit of a discrete-time \emph{random walk} $X_T=\sum_{t=1}^T\epsilon_t$, $T\in\N$, where $(\epsilon_t)$ are independent, identically distributed (i.i.d.) with $\P(\epsilon_t=1)=1/2=\P(\epsilon_t=-1)$. It holds that $T^{-1/2}X_{\lfloor T s\rfloor}\rightarrow W_s$, as $T\to\infty$, with the floor function ${\lfloor \,\cdot\,\rfloor}$. Note that continuous-time white noise does not exist in the sense of a measurable stochastic process related to the fact, that the paths of $(W_t)$ are continuous, but nowhere differentiable. So each realization of a Brownian motion has this fascinating property like the Weierstrass function. The existence was proved by Wiener in 1923 and we refer to the textbook \cite{schilling2014brownian} for an overview on different constructions and various properties. The Brownian motion is really at the heart of the theory on continuous-time stochastic processes. From my point of view, stochastic processes is mainly the study of classes of processes which share one or some of the fundamental properties of the Brownian motion:
\begin{itemize}
\setlength{\itemsep}{.05cm}
\item It is a \emph{Gaussian process}. That means that all finite-dimensional distributions of $W_{t_1},\ldots,W_{t_n}$ are normal for all $n\in\N$, and arbitrary times. A Gaussian process is uniquely determined by its expectation and covariance function. A Brownian motion is hence uniquely characterized as a continuous Gaussian process with $W_0 = 0$, $\E[W_t]=0$, for all $t$, and the covariance function \(\cov(W_s,W_t)=\min(s,t)\).
\item It is a \emph{L\'{e}vy process}, that is a process $(X_t)$ with independent stationary increments, $X_0=0$, and which satisfies $\forall\epsilon>0$: $\P(|X_{t+h}-X_t|>\epsilon)\to 0$, as $h\to 0$. Writing $X_t=\sum_{j=1}^n(X_{t_j}-X_{t_{j-1}}), 0=t_0< t_1\le t_2\le \cdots\le t_n=t$, their study is related to studying sums of i.i.d.\ random variables. The second fundamental example of a L\'{e}vy process is a Poisson (jump) process. 
\item It is a \emph{martingale} whose conditional expectation satisfies $\E[W_t|W_s]=W_s$, almost surely for all $t\ge s$. 
\item It is a \emph{Markov process}, for which $(W_{t+s}-W_t)_{s\ge 0}$ is another Brownian motion independent of $\{W_u,0\le u\le t\}$.
\item It is \emph{self-similar} such that $a^{-1/2}W_{at}$ is distributed as $W_t$, for all $a>0$. Looking at a path in a plot with axes that have no labels, we could hence not say anything about the scaling.
\end{itemize}
 
The famous {\emph{Black-Scholes price model}} follows Bachelier's principles and describes the stock price $S_t$ at time $t$ by the stochastic differential equation
\begin{align}\label{BS}\text{d}S_t=a S_t\,\text{d}t+\sigma S_t \text{d}W_t\,\end{align} 
which is solved by a geometric Brownian motion. This can be proved by a simple application of It\^{o}'s lemma. Bachelier's mantra, that there is no (short-term) expected profit for traders without insider information, is culminated in the ``Fundamental Theorem of Asset Pricing'': In an arbitrage-free market, prices follow martingales. An extension to general \emph{no arbitrage conditions} by \cite{delbaen} implies that log-prices should be {\emph{semimartingales}}. These are processes that can be expressed on compact time intervals as sums of a martingale and a process of finite variation. Of course, the Brownian motion is a semimartingale. Interestingly, these processes do not only occur in this modern fundamental theorem of asset pricing, but are also the class of ``good integrators'' for stochastic integrals according to the Bichteler-Dellacherie theorem. We work with a continuous-time log-price modelled as an \emph{It\^{o} semimartingale}:
\begin{align}\label{SM}X_t=C_t+\;\text{JUMPS},~\text{with}~ C_t=X_0+\int_0^t \mu_s\,\text{d}s+\int_0^t\sigma_s\,\text{d}W_s,t\in[0,1]\,.\end{align}
Throughout this work we focus on real-valued, one-dimensional processes modelling the price evolution of only one asset. Since the multivariate setting has some surprises in store, a visit is as well interesting. Most of my own work is in fact devoted to multivariate phenomena and we do not live in a one-dimensional world. However, for my selection of topics in this manuscript and simplicity it is sufficient to stick to a one-dimensional image space at a fixed time. Some recent aspects of a multi-dimensional analysis and its applications to portfolios are briefly mentioned in the outlook. In \eqref{SM}, $(\sigma_s)$ is the \emph{volatility} process. Volatility is the prevalent concept to describe market risk. It will therefore be our main target of statistical inference. The first important advancement compared to the Black-Scholes model \eqref{BS} is to include time-varying volatility which may be a stochastic process itself. A second important advancement is to consider \emph{price jumps} which can describe rapid price adjustments in response to new information provided, e.g., by economic shocks or central bank announcements.

The realm of big data in financial markets can be viewed as a stroke of luck for statisticians and data analysts giving us huge data sets at hand, in particular when looking at intra-daily tick data. In electronic financial markets almost 70\% of the volume is nowadays attributed to high-frequency trading. This should be very useful for risk quantification, but similar as in other fields, the picture how to efficiently exploit big data is not yet complete as the data sets are complex and noisy. Nevertheless, or maybe for this very reason, for applications to intra-day financial data in econometrics and also in macroeconomic studies high-frequency statistics for semimartingales became highly relevant. The use of high-frequency financial data was promoted by \cite{engle} and \cite{andersen}, among others, around the turn of the millennium. \cite{engle} called the situation when prices from all transactions are recorded ``ultra-high frequency data''. Such data samples are rather small compared to the ones we now have available from limit order books, see Figure \ref{Fig:LOMN} for a snapshot. The advent of ultra-high frequency data motivated to bridge several strands of research between statistics for stochastic processes, time series analysis and econometrics. Since then high-frequency statistics for semimartingales has evolved into a huge field of study. Many brilliant researchers made substantial contributions to this field including Yacine A\"{i}t-Sahalia, Ole Barndorff-Nielsen, Jean Jacod, Per Mykland and younger ones as Mark Podolskij and Viktor Todorov, to  name just a few. 

Having self-similar processes in price models, it might be a bit disappointing to learn that nevertheless different models are used when considering data over different time scales. While for instance, under a very high time resolution over a short interval discreteness of prices in the image space becomes relevant, what is not in line with a Brownian motion, looking at low time resolutions, e.g., daily data over a year, discrete time series might be adequate models. Depending on the time resolution and the concrete application, micro-, meso- or macroscopic models are used as devices to coherently describe price dynamics and to allow at the same time a good calibration of the model. Often the applied focus is on risk forecasting, portfolio allocation or asset pricing. Since the seminal works on AutoRegressive Conditional Heteroskedastic (ARCH) time series and its generalizations (GARCH), see \cite{engle1986},\,\footnote{Robert F.\ Engle received the Nobel prize in 2003 with Granger. Scholes and Merton won it in 1997 when Black's key role was pointed out, but the prize is not given posthumously.} in the late 1980s, time series models are the main workhorse to perform volatility forecasting, typically on a daily basis. GARCH models take into account several stylized facts as volatility clustering. In the era of intra-day high-frequency price recordings, one common approach is to infer volatility based on a continuous-time model and to plug the estimates into time series models used over daily time scales, see, e.g., \cite{hansen2012realized}. Although it might appear odd to a mathematician, that the continuous-time model is then not used over all time scales, many econometricians are satisfied with the good empirical performance. In recent years the research on forecasting shifted more towards fractional time series and continuous-time fractional models. The rough volatility literature and the data example in Section \ref{sec:4} inherit a similar philosophy to combine estimates from different models for different time resolutions.

Stylized facts of ultra high-frequency data contradict a pure semimartingale model due to so-called \emph{market microstructure}. An early paper reporting these empirical facts and which foreshadowed the very successful semimartingale with additive noise model to describe tick data was \cite{zhou}. Various related estimation methods for the volatility in this model have been proposed including two-scale and multi-scale realized volatility by \cite{zhangmykland} and \cite{zhang}, pre-averaging by \cite{JLMPV}, the kernel estimator by \cite{bn2}, the Quasi-Maximum-Likelihood approach by \cite{xiu} and many more. A lower bound for the asymptotic variance of integrated volatility estimators at optimal rate for this problem was established by \cite{reiss}. In a related vein, the presented model for limit order book quotes in Section \ref{sec:5} preserves the idea of an underlying semimartingale efficient price and describes market microstructure effects by additive noise. The only difference, which is however crucial, is that we proceed from regular noise with expectation zero to irregular, non-negative one-sided noise.
\begin{figure}[t]
\includegraphics[width=6.8cm]{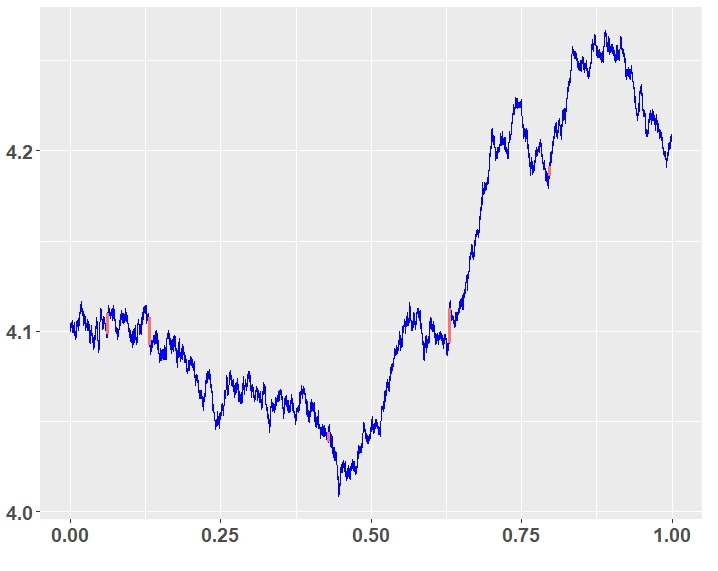}\includegraphics[width=7.2cm]{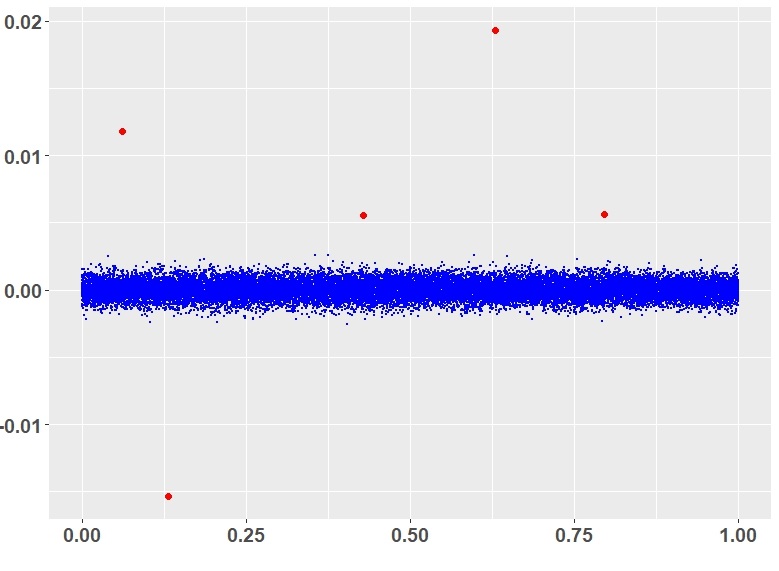}
\caption{\label{Fig:jumps}Log-price (left) in Heston model with 5 jumps and its increments (right).}
\end{figure}

Deciding whether there are jumps in a price process or if it is continuous is beyond volatility estimation one of the most important problems in the literature on high-frequency data. Like ``\emph{Cat or dog?}'' nowadays seems to be one of the big questions 
based on inputs from images in research on AI and machine learning, ``\emph{Jump or no jump?}'' poses one of the main testing problems for statistics of financial markets. One reason is that it is crucial to select and work with an adequate price model. Moreover, volatility of continuous price movements and jumps are used to describe different kinds of market risks and it is important to distinguish between the two in economic studies. Based on \emph{continuous-time} paths it was simply possible to see jumps, while based on \emph{discrete} recordings with a fix time $\Delta$ between subsequent observations it is impossible to decide about the question. In a high-frequency asymptotic regime with distance $\Delta_n\to 0$ between discrete recordings, the question yields an interesting problem. In Section \ref{sec:3} we address this problem based on methods from extreme value theory and the beautiful theory of order statistics placing R\'{e}nyi's representation at the forefront. Extreme value theory concerns outliers in stationary sequences of random variables or time series, in particular the asymptotic distribution of maxima of i.i.d.\ random variables. An outlier refers to a realized value which is far away from the mean level of a time series, e.g., a year with a once-in-a-century flood in a sequence of yearly rainfall data. The mean level before and after an outlier remains the same. The picture of a jump is different. For instance, if a stock price evolves around level $a$ before a jump of size $b$ in response to the communication of quarterly results of a company, the price will continue to move at the new level $a+b$ after the event. Jumps and outliers are nevertheless closely related. In particular, as Figure \ref{Fig:jumps} reveals, jumps in a process can be detected as outliers in the sequence of corresponding increments. Applying methods from extreme value theory to increments is hence one promising starting point to construct jump tests. Figure \ref{Fig:jumps} shows a simulated price based on a popular Heston model with constant drift and its specific stochastic volatility $(\sigma_t)$, to that we add compound Poisson jumps with jump sizes drawn from a Laplace distribution. Even though the five realized jumps are highlighted in the path of the process left-hand side in  Figure \ref{Fig:jumps}, it might be difficult to spot all of them, while it is easier to find the increments with jumps as outliers in the increments right-hand side.

In view of numerous references on the broader topic\,\footnote{Searching for the keyword ``high frequency data'' yields ca.\ 8 million results.}, we only strive to sketch a picture of three selected recent developments. Concerning deeper, open questions, parts of the more recent literature in the field has become rather technical and sophisticated. It is the goal here to shed light on some key ideas and insights, and to point out their relation to concepts from probability and mathematical statistics. Our journey shall {\emph{not}} be a random walk. We begin in Section \ref{sec:2} as a prologue with fundamental concepts and cornerstones of statistics for semimartingales based on high-frequency data. We highlight the impact of \emph{Jean Jacod}. Section \ref{sec:3} reviews statistical tests for price jumps in high-frequency data based on extreme value theory. We complement the existing methods proposing alternative, original methods. \emph{Gumbel} is the second researcher whose contribution is emphasized. The newly developed R\'{e}nyi test in Theorem 1 is built upon a maximal difference between order statistics motivated by the \emph{R\'{e}nyi representation}. This is the first classical, elementary yet ingenious, concept from probability which we highlight in a colourbox. A comment on the genius behind this is given in another box. In Section \ref{sec:4} we provide a brief review on rough fractional stochastic volatility with a data example. The review mentions recent results on the identifiability of the Hurst exponent under high-frequency asymptotics. Our Theorem 2 adds a new result which points out that the regularity of stochastic volatility in a more general sense is identifiable only in some cases and can be estimated only with a slower rate of convergence. This reveals that -- even based on high-frequency data -- there are frontiers in recovering path properties of a latent volatility from price recordings. The third part of the journey in Section \ref{sec:5} is admittedly captured to summarize and reflect some own research. The colourboxes place the spotlight on the taxi problem and the reflection principle for Brownian motion. The taxi problem is a popular example for estimating a boundary parameter. It is even contained in some school-books which develop the main estimation ideas in an intuitive manner. A nice feature of this example is that it is nevertheless deeper understood with concepts as complete and sufficient statistics. While the taxi problem is used to motivate the estimation based on local minima of best ask quotes, the reflection principle paves the way to determine distributions of functionals of Brownian motion which is an important ingredient of the asymptotic analysis of our boundary model.

 
\section{Elements of high-frequency statistics for semimartingales\label{sec:2}}
For some first simple but useful insights consider the parametric model with log-price
\[X_t=\mu\cdot t+\sigma\cdot W_t\,,\]
with a standard Brownian motion $(W_t)$ and $\mu\in\R$, $\sigma>0$ unknown parameters. An obvious problem for statistics is to estimate the two parameters. Assume that we have discrete observations $X_0,X_{t_1},\ldots,X_{t_n}$ on an equidistant grid $t_j=j\Delta_n$, $0\le j\le n$, available. In statistics for stochastic processes there are different asymptotic regimes, either $T=n\Delta_n$ is fix and $\Delta_n\to 0$ (\emph{high-frequency}), or $\Delta_n=\Delta$ is fix and $T=n\Delta\to\infty$ (\emph{low-frequency}), or even both is true and we have high-frequency data over an asymptotically large time interval. 
We have in our setting many observations of only \emph{one single path} of the process. The inference and testing problems, e.g., whether there are jumps or not, are in this field formulated and addressed \emph{path-wise}, i.e., we want to know if the realized path has jumps or not.

Statistical inference is based on the increments
\begin{align}\label{incr}\Delta_j^n X=X_{t_j}-X_{t_{j-1}},1\le j\le n\,.\end{align}
In the parametric, equidistant setting it holds that $\Delta_j^n X=\mu\cdot \Delta_n+\sqrt{\Delta_n}\cdot\sigma\cdot Z_i$, with $Z_i$ i.i.d.\ standard normal, denoted $\mathcal{N}(0,1)$, random variables, such that the increments have expectation $\mu \Delta_n$ and variance $\sigma^2 \Delta_n$. In this standard model, the maximum likelihood estimator
\begin{subequations}
\begin{align}\hat\mu_{ML}&=\frac{\sum_{j=1}^n\Delta_j^n X}{n\Delta_n}=\frac{X_T-X_0}{T}\,,\\
\hat\sigma^2_{ML}&=\frac{\sum_{j=1}^n(\Delta_j^n X)^2}{n\Delta_n}-(\hat\mu_{ML})^2\Delta_n=\frac{\sum_{j=1}^n(\Delta_j^n X)^2}{n\Delta_n}-\frac{\big(\sum_{j=1}^n\Delta_j^n X\big)^2}{nT}\,,
\end{align}
\end{subequations}
has the smallest possible variance. The quadratic risk of the estimated drift parameter
\begin{align}\label{mse}\E\big[\big(\hat\mu_{ML}-\mu\big)^2\big]=\E\Big[\Big(\frac{\sigma\cdot W_T}{T}\Big)^2\Big]=\frac{\sigma^2}{T}\end{align}
does not depend on $\Delta_n$. It tends to zero only if $T\to\infty$, and not under high-frequency asymptotics.
Modelling intra-daily financial data, we have usually discrete observations available at very high frequencies, e.g.\ once per second. Naturally, we work in a high-frequency asymptotic regime. We learn from \eqref{mse} that we cannot consistently estimate the drift in this situation. Looking at $\hat\sigma^2_{ML}$, we see that the term with $\hat\mu_{ML}$ tends to zero. The standard estimator for $\sigma^2$ is therefore the \emph{realized volatility}
\begin{align}\label{rv}\hat\sigma^2=T^{-1}\sum_{j=1}^n(\Delta_j^n X)^2\,.\end{align}
It is an elementary exercise using moments of a normal distribution to compute its squared risk
\begin{align}\label{mse2}\E\big[\big(\hat\sigma^2-\sigma^2\big)^2\big]=\frac{2\sigma^4\Delta_n}{T}+\frac{4\sigma^2\mu^2\Delta_n^2}{T}+\mu^4\Delta_n^2\,,\end{align}
which tends to zero under high-frequency asymptotics. We should not use it in a low-frequency regime in that its variance tends to zero, but the bias does not. Throughout the remainder of this manuscript we set $T=1$, without loss of generality. It is not surprising that a central limit theorem (clt)
\begin{align}\label{clt1}\sqrt{n}(\hat\sigma^2-\sigma^2)\todl \mathcal{N}(0,2\sigma^4)\,,\end{align}
holds true. We write $\todl $ for convergence in distribution (weak convergence). Note, however, that we have a \emph{triangular array} of random variables here and not a sequence, that is, going from $n$ to $n+1$ is not just adding one observation, but all observation times depend on $n$. For this reason, clts for triangular arrays need to be used in the high-frequency framework. The main benefit of a clt is to facilitate \emph{asymptotic confidence} statements. These are feasible when we standardize the left-hand side in \eqref{clt1} with a consistent estimator of the asymptotic standard deviation. A more elegant method -- which might give a slightly better approximation for finite samples -- is a \emph{variance stabilization} applying the $\Delta$-method with the strictly increasing logarithm:
\begin{align}\label{fclt1}\sqrt{n}(\log\big(\hat\sigma^2\big)-\log(\sigma^2))\todl \mathcal{N}(0,2)\,.\end{align}
Since $\log$ is strictly increasing, confidence intervals for $\log(\sigma^2)$ readily translate into confidence intervals for $\sigma^2$. A current discussion comparing four approaches to perform asymptotic confidence based on a clt, which all work in this example, is given in \cite{politis}. Beyond parameter estimation, the analogy to the standard statistical model allows to transfer more methods, e.g., likelihood ratio tests.

In the more general model with high-frequency observations of a continuous semimartingale $(C_t)$ from \eqref{SM} with time-varying drift $(\mu_s)$ and volatility $(\sigma_s)$, the first goal is estimation of the \emph{integrated volatility} $\int_0^1\sigma_s^2\,\text{d}s$, e.g., integrated over trading days as a daily measure of risk. Since It\^{o}'s isometry yields that
\begin{align*}\E\big[(\Delta_j^n C)^2\big]=\int_{(j-1)\Delta_n}^{j\Delta_n}\sigma_s^2\,\text{d}s+\mathcal{O}(\Delta_n^2)\,,\end{align*}
the realized volatility \eqref{rv} is still a suitable (and in fact optimal) estimator.

A very strong asymptotic result under mild regularity conditions is the \emph{functional stable central limit theorem} for realized volatility by \cite{jacod1}:
\begin{align}\label{fclt2}\sqrt{n}\,\Big(\sum_{j=1}^{\lfloor nt\rfloor }\big(\Delta_j^n C\big)^2-\int_0^t\sigma_s^2\,\text{d}s\Big)\stackrel{st}{\longrightarrow} \int_0^t\sqrt{2}\sigma_s^2\, \text{d}B_s~,t\in[0,1]\,,\end{align}
with $(B_s)$ a Brownian motion independent of $(W_s)$ defined on an extension of the original probability space. This implies the marginal clt
\begin{align}\label{clt2}\sqrt{n}\,\Big(\sum_{j=1}^{n}\big(\Delta_j^n C\big)^2-\int_0^1\sigma_s^2\,\text{d}s\Big)\stackrel{st}{\longrightarrow} \mathcal{N}\Big(0,\int_0^1 2\,\sigma_s^4\, \text{d}s\Big)\,.\end{align}
For stochastic volatility, the variance of the limit distribution is random, the limit is then called \emph{mixed normal}. For this reason it is important that the convergence is \emph{stable}. This is a stronger mode of weak convergence equivalent to joint weak convergence with every measurable bounded random variable on the same space. Since it allows for a $\Delta$-method and weak convergence after standardization, known as Slutsky's lemma for weak convergence, it is a crucial ingredient to construct asymptotic confidence intervals. Beyond inference on the integrated volatility, the functional clt allows for various other statistical applications, for instance, a volatility change-point test of cusum-type as explained in Section 2 of \cite{BJV}.
 
\begin{figure}[t]
{\colorbox{lightgray}{\parbox{0.98\linewidth}{\textbf{Jean Jacod} is certainly a spiritus rector of high-frequency statistics for semimartingales. He was head of the probability group at Paris VI (Pierre et Marie Curie) from 1987 until 2000. He was well known for his textbooks and research on limit theorems, jump processes and Malliavin calculus when he established the main groundwork in this field. He pointed out that stable convergence is the right concept for asymptotic statements on realized volatility and related functionals. Proofs of stable clts for functionals of semimartingale increments usually rely on his results. He moreover provided techniques to separate jumps from continuous movements which are exploited by many authors. The textbooks \cite{JP} and \cite{sahaliajacod} summarize main aspects of high-frequency statistics. 
}}}
\end{figure}

It is not only relevant to infer the integrated volatility, the nonparametric estimation of the \emph{spot volatility} process $(\sigma_s^2)$ is another central problem in high-frequency statistics. At this point, it is beneficial to introduce some rigorous assumption on the characteristics of the continuous semimartingale log-price process $(C_t)$.
\begin{assump}\label{volass}
The drift $(\mu_t)_{t\ge 0}$ is locally bounded and the volatility is strictly positive, $\inf_{t\in[0,1]}\sigma_t>0$, almost surely. 
For all $0\leq t+s\leq1$, $t\ge 0$, $s\ge 0$, with some constants $ C_{\sigma}>0$, and $\alpha>0$, it holds that 
\begin{align}\label{vola}\E\big[(\sigma_{(t+s)}-\sigma_{t})^2\big]  \le C_{\sigma} s^{2\alpha}\,.\end{align}
\end{assump}
Condition \eqref{vola} imposes a certain \emph{regularity} $\alpha$ of the volatility process. Due to the expectation, it is not Hölder continuity and \eqref{vola} does not rule out volatility jumps. The increments of some compound Poisson jump process for instance, over a time interval of length $s$, equal a constant times $(s^2+s)$, if the jump size distribution has a second moment. Therefore, it satisfies \eqref{vola} with $\alpha=1/2$. This is true for much more general jump processes. A continuous semimartingale and in particular Brownian motion satisfy \eqref{vola} with $\alpha=1/2$. The Hölder condition \eqref{vola} in quadratic mean is a convenient concept to describe the variability of a stochastic process. It is also used in other fields of probability, for instance, for functional data in \cite{localreg}. In particular, the rate of convergence, with that $(\sigma_s^2)$ at some time $s\in(0,1)$ can be estimated, hinges on the regularity parameter $\alpha$. Using a local average of rescaled squared increments
\begin{align}\label{spotvola}\hat\sigma_{s}^2=\frac{n}{k_n} \sum_{j=\lfloor s n\rfloor +1}^{\lfloor s n\rfloor +k_n} \big(\Delta_j^n C\big)^2\,,\end{align}
as estimator yields with $k_n=n^{2\alpha/(2\alpha+1)}$, for which the order of the squared bias $(k_n\Delta_n)^{2\alpha}$ is the same as that of the variance $k_n^{-1}$, the minimal root mean squared error of order $n^{-\alpha/(2\alpha+1)}$. Given that the regularity parameter determines optimal spot volatility estimation, inference on an unknown $\alpha$ is certainly of interest. This, however, is an intricate problem not yet solved in general which we visit in Section \ref{sec:4}. In the standard case $\alpha=1/2$, spot volatility can be estimated with rate $n^{-1/4}$. In the best (non-constant) case $\alpha=1$, the rate is $n^{-1/3}$.

If there are jumps, the realized volatility converges in probability, denoted $\stackrel{\P}{\longrightarrow}$, to the entire quadratic variation:
\[\sum_{j=1}^{n}\big(\Delta_j^n X\big)^2\stackrel{\P}{\longrightarrow}\int_0^1\sigma_s^2\text{d}s+\sum_{s\le 1}(\Delta X_s)^2\,.\]
We write $\Delta X_s=X_s-X_{s-}=X_s-\lim_{t\uparrow s} X_t$. It is common notation to write jumps with sums over uncountable index sets, since the processes will always have only countably many random jump times. Most relevant is first to estimate the integrated volatility in presence of the \emph{nuisance} jumps. To get rid of jumps, we need to discard in particular large jumps as in Figure \ref{Fig:jumps} which are contained in the large absolute increments. A natural approach is hence to truncate increments whose absolute values are above a certain threshold. For this purpose, define a sequence $(u_n)$, with $u_n\propto \Delta_n^{\tau}$, $\tau\in(0,1/2)$. Denote $\1\{A\}$ an indicator function which is 1 if $A$ is true and 0, else. The idea is now to work out under which restrictions on the jumps the \emph{truncated realized volatility} satisfies 
\begin{align}\label{clt3}\sqrt{n}\Big(\sum_{j=1}^{n}\big(\Delta_j^n X\big)^2\1\{|\Delta_j^n X|\le u_n\}-\int_0^1\sigma_s^2\,\text{d}s\Big)\stackrel{st}{\longrightarrow} \mathcal{N}\Big(0,\int_0^1 2\,\sigma_s^4\, \text{d}s\Big)\,,\end{align}
the same clt as for the realized volatility without jumps in \eqref{clt2}. Sufficient is
\begin{align*}\sqrt{n}\Big(\sum_{j=1}^{n}\big(\Delta_j^n X\big)^2\1\{|\Delta_j^n X|\le u_n\}-\sum_{j=1}^{n}\big(\Delta_j^n C\big)^2\Big)\stackrel{\P}{\longrightarrow} 0\,.\end{align*}
The truncation method was pioneered by \cite{mancini}, is summarized in Chapter 13 of \cite{JP}, and the community is still working on refinements, see, for instance, \cite{figueroa2019optimum} and \cite{amorino2020unbiased}.
We can decompose the difference 
\begin{align*}&\sum_{j=1}^{n}\big(\Delta_j^n X\big)^2\1\{|\Delta_j^n X|\le u_n\}-\sum_{j=1}^{n}\big(\Delta_j^n C\big)^2\\
&=\sum_{j=1}^{n}\1\{|\Delta_j^n C|>\kappa u_n\}\Big(\big(\Delta_j^n X\big)^2\1\{|\Delta_j^n X|\le u_n\}-\big(\Delta_j^n C\big)^2\Big)\\
&\quad - \sum_{j=1}^{n}\1\{|\Delta_j^n C|\le \kappa u_n,|\Delta_j^n X|> u_n\}\big(\Delta_j^n C\big)^2\\
&\quad + \sum_{j=1}^{n}\1\{|\Delta_j^n C|\le \kappa u_n,|\Delta_j^n X|\le u_n\}\big(\big(\Delta_j^n X\big)^2-(\Delta_j^n C\big)^2\big)\\
&=\text{\RomanNumeralCaps{1}}_n+\text{\RomanNumeralCaps{2}}_n+\text{\RomanNumeralCaps{3}}_n\,,
\end{align*}
with some $\kappa\in(0,1)$, e.g., $\kappa=1/2$. For the term $\text{\RomanNumeralCaps{1}}_n$, knowing that $\max_j|\Delta_j^n C|$ is of stochastic order $\log(n)\sqrt{\Delta_n}$, what is contained in the next section, and that it is even almost surely smaller than $u_n$ by the law of the iterated logarithm is sufficient. The argument by Jacod is more elementary and in the following way:
\begin{align*}|\text{\RomanNumeralCaps{1}}_n|&\le \sum_{j=1}^{n}\1\{|\Delta_j^n C|>\kappa u_n\}\Big(\big(\Delta_j^n X\big)^2\1\{|\Delta_j^n X|\le u_n\}+\big(\Delta_j^n C\big)^2\Big)\\
&\le (1+\kappa^{-2})\sum_{j=1}^{n}\big(\Delta_j^n C\big)^2\bigg|\frac{\Delta_j^n C}{\kappa u_n}\bigg|^N\,,
\end{align*}
for any $N\in\N$, and using moments of $(\Delta_j^n C)$ and choosing $N$ sufficiently large yields that $\sqrt{n}\,\text{\RomanNumeralCaps{1}}_n\stackrel{\P}{\longrightarrow} 0$, since $\sqrt{n}\,\E\big[|\text{\RomanNumeralCaps{1}}_n|\big]\to 0$. 

The two other terms require a closer look at the jump component $(J_t)$ of the semimartingale $(X_t)$, $X_t=C_t+J_t$. Most literature in the area imposes the general structure of It\^{o} semimartingales in the sense of Section 2.1.4 of \cite{JP} which admit a ``Grigelionis representation''. The jumps are independent of $(C_t)$ and separated in a martingale of compensated small jumps and larger jumps
\begin{align*}J_t\negthinspace= \negthinspace\int_0^t\negthinspace \int_{\mathbb{R}}\delta(s,z)\1{\{|\delta(s,z)|\leq 1 \}}(\mu-\nu)(\text{d}s,\text{d}z) \negthinspace+\negthinspace\int_0^t \negthinspace\int_{\mathbb{R}}\delta(s,z)\1{\{|\delta(s,z)|> 1 \}} \mu(\text{d}s,\text{d}z), \end{align*}
with a Poisson random measure $\mu$ compensated by $\nu(\text{d}s,\text{d}z)=\lambda(\text{d}z)\otimes \text{d}s$, with a $\sigma$-finite measure $\lambda$. The function $\delta$, for which the third argument $\omega$ is consequently also not written, is a predictable function, for which we assume that a non-negative deterministic function $\gamma$ exists, such that
\begin{align*}\sup_{\omega,x}|\delta(t,x)|/\gamma(x)\end{align*} 
is locally bounded. The benefit of such a meticulous definition of $(J_t)$ is to preserve generality. For instance, the large class of L\'{e}vy jump processes is contained as a special case with $\delta(s,x)=x$. The main assumption on the jumps is captured in a {\emph{jump activity index}} $r\in[0,2]$, for which
\begin{align}\label{BG}\int_{\mathbb{R}}(\gamma^{r}(z)\wedge 1)\lambda(\text{d}z)<\infty\,.\end{align}
Basically, this means summability of $\sum_{s\le 1}|\Delta J_s|^r$. For $r=0$ this is a strong restriction with at most finitely many jumps on $[0,1]$ (\emph{finite-activity}), and for $r=1$ we assume the jump process to be of \emph{finite variation}. The larger $r$, the less restrictive is the condition.

Now we have the toolbox to handle terms $\text{\RomanNumeralCaps{2}}_n$ and $\text{\RomanNumeralCaps{3}}_n$. In the sequel, let $K$ be a generic constant. With Markov's inequality and Cauchy-Schwarz, we obtain for $\text{\RomanNumeralCaps{2}}_n$  that
\begin{align*}
&\E[|\text{\RomanNumeralCaps{2}}_n|]\le K \sum_{j=1}^{n}\Delta_n \P\big(|\Delta_j^n J|\ge (1-\kappa) u_n\big)\\
&\le K \sum_{j=1}^{n}\Delta_n u_n^{-2}(1-\kappa)^{-2}\E\big[|\Delta_j^n J|^2\1\{|\Delta_j^n J|\ge  (1-\kappa) u_n\}\big]\\
&\le K \sum_{j=1}^{n}\Big(\int_{(j-1)\Delta_n}^{j\Delta_n}\int_{\R}\gamma(z)\1\{\gamma(z)>u_n\}\nu(\text{d}s,\text{d}z)\Big)^2\\
&\le K \sum_{j=1}^{n}\Big(\int_{(j-1)\Delta_n}^{j\Delta_n}\int_{\R}\gamma^2(z)\1\{\gamma(z)>u_n\}\text{d}s\lambda(\text{d}z)\int_{(j-1)\Delta_n}^{j\Delta_n}\int_{\R}u_n^{-r}\gamma^r(z)\text{d}s\lambda(\text{d}z)\Big)\\
&=\mathcal{O}\big(\Delta_n^{1-r\tau}\big)\,.
\end{align*}
The term $\text{\RomanNumeralCaps{2}}_n$ can be thought of as an error by truncating also the continuous components whenever the threshold is exceeded. Therefore, the order gets smaller for smaller $\tau$, moving $u_n$ farer away from $\sqrt{\Delta_n}$, when only very large absolute increments are truncated. If $r\tau<1/2$, we conclude that $\sqrt{n}\text{\RomanNumeralCaps{2}}_n\stackrel{\P}{\longrightarrow} 0$.

Most difficult is term $\text{\RomanNumeralCaps{3}}_n$ due to small jumps in non-truncated increments. We exploit the martingale structure of the small jumps to apply Burkholder's inequality and to deduce
\begin{align*}
&\E[|\text{\RomanNumeralCaps{3}}_n|]\le K \sum_{j=1}^{n}\E\big[|\Delta_j^n J|^2\1\{|\Delta_j^n J|\le  (1+\kappa) u_n\}\big]\\
&\le K \sum_{j=1}^{n}\E\big[([J,J]_{j\Delta_n}-[J,J]_{(j-1)\Delta_n})\1\{|\Delta_j^n J|\le  (1+\kappa) u_n\}\big]\\
&\le K \sum_{j=1}^{n}\int_{(j-1)\Delta_n}^{j\Delta_n}\int_{\R}\gamma^2{(z)}\1\{\gamma(z)\le u_n\}\nu(\text{d}s,\text{d}z)\\
&\le K \sum_{j=1}^{n}u_n^{2-r}\Delta_n \int_{\R}\gamma^r{(z)}\lambda(\text{d}z)\\
&=\mathcal{O}\big(\Delta_n^{\tau(2-r)}\big)\,.
\end{align*}
This term gets smaller for larger $\tau$, moving $u_n$ closer to $\sqrt{\Delta_n}$. Both terms decrease for smaller $r$. 
To ensure that $\sqrt{n}\text{\RomanNumeralCaps{3}}_n\stackrel{\P}{\longrightarrow} 0$, we need that $r<1$, and $\tau(2-r)>1/2$. This is the main result about truncated realized volatility: If $r<1$, with $\tau\in\big((2(2-r))^{-1},1/2\big)$, what can be ensured by selecting $\tau$ close to $1/2$, it satisfies \eqref{clt3}. While we emphasize some key steps of the proof, the bounds for $\text{\RomanNumeralCaps{2}}_n$ and $\text{\RomanNumeralCaps{3}}_n$ admittedly lack some details. Most of them are elementary, as carefully using the triangle inequality, but a few are deeper. A less pedagogic but rigorous proof can be found in Chapter 13 of \cite{JP}. In particular, in the bound for $\text{\RomanNumeralCaps{2}}_n$ we work under the event with at most one larger jump contained in one increment, for which the Poisson nature of the jumps yields precise estimates, see Step 5 in the proof of Thm.\ 13.1.1 in \cite{JP}. The restrictions on $r$ for spot volatility estimation with truncation are less strict, $r<4/3$, see Section 13.4.1 of \cite{JP}, mainly since the rate is slower with that such a difference needs to tend to 0.

\section{Jump detection in high-frequency data based on extreme value theory\label{sec:3}}

There are several different constructions of tests for jumps in high-frequency data. Let me focus here only on the most prominent one in financial economics by \cite{leemykland}. It is based on the maximal (absolute) normalized increment and exploits its asymptotic Gumbel distribution under the null hypothesis of no jumps. The test is sometimes called the \emph{Lee-Mykland test}, or the \emph{Gumbel test} in the literature. The asymptotic Gumbel distribution is traced back to the one of the maximum of i.i.d.\ $\mathcal{N}(0,1)$ random variables and thus classical extreme value theory.

In the sequel, we consider real-valued random variables on some probability space with measure $\P$. For random variables $(X_j)_{1\le j\le n}$, we denote the \emph{order statistics} $X_{(j)},1\le j\le n$, with $X_{(1)}\le X_{(2)}\le \ldots,X_{(n)}$. In particular, $X_{(1)}=\min_{1\le j\le n} X_j$ refers to the minimum and $X_{(n)}=\max_{1\le j\le n} X_j$ to the maximum. These are unique with probability 1 for random variables whose distributions are absolutely continuous with respect to the Lebesgue measure. It is a standard example in extreme value theory, see, e.g., Example 1.1.7 in \cite{haan}, that for $(X_j)_{1\le j\le n}$ i.i.d.\ $\mathcal{N}(0,1)$, the maximum satisfies
\begin{align}\label{gumbel}
a_n^{-1}\big(X_{(n)}-b_n\big)\todl \Lambda\,,a_n^{-1}=\sqrt{2\log (n)}~, ~\mbox{and}~b_n=a_n^{-1}-\frac{\log(4\pi\log (n))}{2\sqrt{2\log (n)}}\,,
\end{align}
with the \emph{Gumbel limit distribution} $\Lambda$, i.e., it holds for all $x\in\R$ that 
\begin{align*}
\lim_{n\to\infty}\P\Big(a_n^{-1}\Big(X_{(n)}-b_n\Big)\Big)\le x\Big)= \exp\big({-e^{-x}}\big)\,.
\end{align*}
Limit distributions of maxima of i.i.d.\ random variables can only be of Gumbel, Weibull or Frech\'{e}t type. If sequences $(a_n)$ and $(b_n)$ exist, such that for a cumulative distribution function (cdf) $F$ the maxima of i.i.d.\ random variables with cdf $F$ converge to $\Lambda$, write $F\in \text{MDA}(\Lambda)$ (maximum domain of attraction).

\begin{figure}[t]
{\colorbox{lightgray}{\parbox{0.98\linewidth}{\textbf{Emil Julius Gumbel} is an interesting personality for at least two reasons, his work as a researcher on extreme value theory, and his political commitment. Despite his academic contributions, his pacifism, statistical research which exposed the leniency towards political murders committed by right-wing extremists, and his political activity in general led to his dismissal from the University of Heidelberg in 1932. This troubling chapter in the university's history is recently critically reflected.\,\footnotemark[3] Both aspects of his legacy arouse interest in the recent literature on the history of science, see, e.g, \cite{vogt2021} and \cite{rendtel2021} (in German).}}}
\end{figure}
\footnotetext[3]{\href{https://www.uni-heidelberg.de/de/universitaet/heidelberger-profile/historische-portraets/krieg-gegen-einen-pazifisten}{Emil Julius Gumbel: Krieg gegen einen Pazifisten} (in German).}
\addtocounter{footnote}{1}

On the null hypothesis $H_0:\sup_{\tau\in (0,1)}|X_{\tau}- X_{\tau-}|=0$, based on high-frequency returns $(\Delta_j^n X)_{1\le j\le n}$, \cite{leemykland} proved in their Thm.\ 2 that 
\begin{align}\label{gumbellm}\sqrt{2\log (2n)}\Big(n^{1/2}\,\max_{1\le j\le n}\bigg|\frac{\Delta_j^n X}{\hat\sigma_{j/n}}\bigg|-\Big(\sqrt{2\log (2n)}-\frac{\log(4\pi\log (2n))}{2\sqrt{2\log (2n)}}\Big)\Big)\todl\Lambda,\end{align}
with a suitable estimator of the volatility $(\hat\sigma_{j/n})$, e.g., from \eqref{spotvola}. The proof is carried out under some assumptions on $(\mu_t)$ and $(\sigma_t)$, which can be generalized to rather weak regularity conditions. The similarity to \eqref{gumbel} is striking. Indeed, the proof traces back the convergence to \eqref{gumbel} showing that the normalized increments can be approximated by i.i.d.\ $\mathcal{N}(0,1)$ observations. The factor $2$ in the logarithm is due to the absolute value in the statistic and exploits the symmetry of $\mathcal{N}(0,1)$. The normalizing sequences given in \cite{leemykland} are in fact slightly different, but asymptotically equivalent. Rejecting the null hypothesis when the statistic left-hand side in \eqref{gumbellm} exceeds $-\log(-\log(1-\alpha))$, $\alpha\in(0,1)$, hence yields a test with \emph{asymptotic level} $\alpha$, i.e., the probability of a false rejection converges to $\alpha$. Under the alternative hypothesis \(H_1:\sup_{\tau\in(0,1)}|X_{\tau}- X_{\tau-}|>0\), the test rejects correctly with asymptotic probability 1. There is moreover a rate of convergence. We can state equivalently that the test rejects correctly with asymptotic probability 1 under \emph{local alternatives}
\[H_1:\liminf_{n\to\infty}{\dbl{n^{\beta}}}\sup_{\tau\in(0,1)}|X_{\tau}- X_{\tau-}|>0,~\mbox{for some}~{\dbl{\beta<1/2}}\,.\]
This means that we can not detect arbitrarily small jumps based on a fix number of $(n+1)$ discrete high-frequency recordings, but jumps which are larger than of order $n^{-1/2}$. The test has several appealing properties. Critical values based on quantiles of the Gumbel distribution can be determined to test at a chosen level $\alpha$. Moreover, the associated argmaximum consistently estimates the time of the largest jump under $H_1$. Based on sequential testing and the largest absolute increments thus jump times and jump sizes can be inferred.

The next paragraph advances research on high-frequency jump tests contributing alternative methods based on extreme value theory which have some advantages compared to the Gumbel test. For the construction, we make an excursion to the nice, classical theory of order statistics. The \emph{exponential distribution}, $\text{Exp}(\lambda)$, with parameter $\lambda>0$, with Lebesgue density $\lambda\exp(-\lambda t),t>0$, and tail function $\P(X>t)=\exp(-\lambda t),t>0$, $X\sim \text{Exp}(\lambda)$, takes an outstanding role in the theory of order statistics.

\begin{figure}[t]
{\colorbox{light-blue}{\parbox{0.98\linewidth}{
\textbf{R\'{e}nyi's representation} is a key result about order statistics.
\begin{lem}\label{renyi}
Let $(E_j)_{1\le j\le n}$ be i.i.d.\ $\text{Exp}(1)$. The equality in distribution
\begin{align*}\big(E_{(1)},E_{(2)},\ldots,E_{(n)}\big)\stackrel{d}{=}\Big(\frac{\tilde E_1}{n},\frac{\tilde E_1}{n}+\frac{\tilde E_2}{n-1},\ldots,\frac{\tilde E_1}{n}+\frac{\tilde E_2}{n-1}+\ldots +\frac{\tilde E_n}{1}\Big)\end{align*}
holds for $(\tilde E_j)$ i.i.d.\ $\text{Exp}(1)$.
\end{lem}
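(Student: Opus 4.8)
The plan is to prove the slightly stronger statement that the \emph{renormalized spacings} of the order statistics are themselves i.i.d.\ $\text{Exp}(1)$, and then read off the representation by telescoping. Concretely, put $E_{(0)}:=0$ and define $\tilde E_k:=(n-k+1)\big(E_{(k)}-E_{(k-1)}\big)$ for $1\le k\le n$. Since $E_{(k)}=\sum_{j=1}^{k}\big(E_{(j)}-E_{(j-1)}\big)=\sum_{j=1}^{k}\tilde E_j/(n-j+1)$, it suffices to show that $(\tilde E_1,\ldots,\tilde E_n)$ is a vector of i.i.d.\ $\text{Exp}(1)$ random variables; that statement is exactly the right-hand side of the lemma.

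First I would write down the joint Lebesgue density of the order statistics. For $(E_j)_{1\le j\le n}$ i.i.d.\ $\text{Exp}(1)$, summing the product density $\prod_j e^{-x_j}$ over the $n!$ orderings gives that $(E_{(1)},\ldots,E_{(n)})$ has density $f(x_1,\ldots,x_n)=n!\,\exp\big(-\sum_{j=1}^{n}x_j\big)$ on the open simplex $\{0<x_1<\cdots<x_n\}$ and $0$ elsewhere. Next I would perform the change of variables $(x_1,\ldots,x_n)\mapsto(s_1,\ldots,s_n)$ with $s_k=x_k-x_{k-1}$ (and $x_0=0$). This is a bijection from the simplex onto the positive orthant $(0,\infty)^n$ whose Jacobian matrix is lower triangular with unit diagonal, hence has determinant $1$. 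The one algebraic identity needed is $\sum_{j=1}^{n}x_j=\sum_{j=1}^{n}(n-j+1)\,s_j$, which follows from $x_j=\sum_{i\le j}s_i$ by counting how often each $s_i$ occurs. Therefore the spacings $(S_1,\ldots,S_n)$ have density $n!\,\exp\big(-\sum_{j=1}^{n}(n-j+1)s_j\big)$ on $(0,\infty)^n$.

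Finally I would rescale by $\tilde E_k=(n-k+1)S_k$, a diagonal change of variables with Jacobian $\prod_{k=1}^{n}(n-k+1)^{-1}=1/n!$. Substituting, the density of $(\tilde E_1,\ldots,\tilde E_n)$ becomes $n!\cdot\frac{1}{n!}\exp\big(-\sum_{k=1}^{n}\tilde e_k\big)=\prod_{k=1}^{n}e^{-\tilde e_k}$ on $(0,\infty)^n$, which is precisely the joint density of $n$ independent $\text{Exp}(1)$ variables. Combined with the telescoping identity of the first paragraph, this gives the claimed equality in distribution.

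I do not expect a genuine obstacle here; the content is elementary once the two changes of variables are set up correctly. The only places requiring a little care are the bookkeeping identity $\sum_j x_j=\sum_j (n-j+1)s_j$ and the verification that the spacing map is a bijection of the simplex onto the orthant with the stated Jacobians. It is worth noting that the probabilistic mechanism behind the clean algebra is the memoryless property of the exponential law together with the fact that $\min_{1\le j\le n}E_j\sim\text{Exp}(n)$: one could instead run the argument as an induction on $n$, conditioning on $E_{(1)}$ and using that $(E_{(2)}-E_{(1)},\ldots,E_{(n)}-E_{(1)})$ are, given $E_{(1)}$, distributed as the order statistics of $n-1$ i.i.d.\ $\text{Exp}(1)$ variables; but the density computation above is shorter and already fully rigorous.
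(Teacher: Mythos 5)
Your proof is correct, and it is rigorous: the joint density $n!\,e^{-\sum_j x_j}$ of the order statistics on the simplex, the spacing map with unit Jacobian together with the bookkeeping identity $\sum_j x_j=\sum_j (n-j+1)s_j$, and the diagonal rescaling with Jacobian $1/n!$ indeed leave exactly the product density $\prod_k e^{-\tilde e_k}$ on the orthant, and the telescoping step then gives the stated representation. Note, however, that this is not the route the paper takes: although the paper remarks that the proof is ``typically based on an elementary change of variables'' (which is what you carried out), its own argument is the one you only mention in passing at the end — min-stability gives $E_{(1)}\sim\text{Exp}(n)$, memorylessness shows that, beyond the current minimum, the remaining $n-1$ variables behave like fresh $\text{Exp}(1)$ variables so that $E_{(2)}-E_{(1)}\sim\text{Exp}(n-1)$ independently of $E_{(1)}$, and the lemma follows by induction. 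Your density computation buys a short, self-contained, one-pass verification that also delivers the independence of the spacings explicitly; the paper's memorylessness argument buys conceptual insight into why the exponential distribution is the one for which this clean representation holds, which is precisely the point the paper wants to emphasize before transferring order-statistics questions for general distributions back to the exponential case.
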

This shows that differences of subsequent order statistics of i.i.d.\ $\text{Exp}(1)$ random variables are independent and
\[E_{(k)}-E_{(k-1)}\stackrel{d}{=}\frac{\tilde E_k}{n-k+1}\sim \text{Exp}(n-k+1),\;k=1,\ldots,n,E_{(0)}=0\,.\]
While the proof is typically based on an elementary change of variables, the result is deeply rooted in the characteristic memorylessness property of the exponential distribution: Conditional on an event $\{E_1>t\},\, t>0$, the tail probability
\[\P(E_1>t+s|E_1>t)=\frac{\P(E_1>t+s)}{\P(E_1>t)}=\frac{\exp(-(t+s))}{\exp(-t)}=\exp(-s)\,,s>0,\]
shows that the conditional distribution is again $\text{Exp}(1)$. The exponential distribution is moreover min-stable, i.e., $E_{(1)}\sim \text{Exp}(n)$, since 
\[\P(E_{(1)}>t)=\P(\cap_{ i=1}^{n}E_i >t)= (\P(E_1 >t))^n=\exp(-n\cdot t)\,,t>0\,.\]
Due to the memorylessness the difference $(E_{(2)}-E_{(1)})$ is $\text{Exp}(n-1)$-distributed as the minimum of $(n-1)$ independent $\text{Exp}(1)$ random variables and Lemma \ref{renyi} follows by induction. Working with order statistics of general i.i.d.\ random variables, we typically apply transformations to the exponential distribution to exploit R\'{e}nyi's representation. 
}}}
\end{figure}
\begin{figure}[t]
{\colorbox{lightgray}{\parbox{0.98\linewidth}{I expect that \textbf{Alfr\'{e}d R\'{e}nyi} is well known to many readers. A list of his various contributions to number theory, probability, analysis and to many more mathematical fields in memoriam of him is given in \cite{renyiref}. Let me highlight the eminent importance of R\'{e}nyi's representation from \cite{renyi}. It is often exploited, e.g., for the asymptotic analysis of estimators of the extreme value index, various statistical methods and for the presented test in the summarized recent area of research.}}}
\end{figure}

We build up our methods on the joint asymptotic distribution of the extreme order statistics. 
\begin{prop}\label{order}Let $(X_j)_{1\le j\le n}$ be i.i.d.\ real-valued random variables with $X_j\sim F\in \text{MDA}(\Lambda)$. For fix $r$, there exist sequences $(a_n)$ and $(b_n)$, such that as $n\rightarrow\infty$, it holds that
\begin{align*}\left(\begin{array}{c}a_n^{-1}(X_{(n)}-b_n)\\ a_n^{-1}(X_{(n-1)}-b_n)\\ \vdots \\ a_n^{-1}(X_{(n-r+1)}-b_n)\end{array}\right)\todl \left(\begin{array}{c}-\log({E_1})\\ -\log( E_1+ E_2)\\ \vdots \\ -\log(E_1+\ldots+ E_{r})\end{array}\right)\,,\end{align*}
where $(E_j)$ are i.i.d.\ $\text{Exp}(1)$.
\end{prop}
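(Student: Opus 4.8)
The strategy is to transport the problem to order statistics of i.i.d.\ uniform random variables, where the Rényi representation (Lemma \ref{renyi}) applies, and then to feed the outcome into the analytic characterization of the Gumbel maximum domain of attraction. First I would write $X_j=F^{\leftarrow}(U_j)$ with $(U_j)$ i.i.d.\ Uniform$(0,1)$ and $F^{\leftarrow}$ the generalized inverse; by monotonicity $X_{(n-k+1)}=F^{\leftarrow}(U_{(n-k+1)})$ for $1\le k\le r$. It is convenient to pass to the tail quantile function $U(t):=F^{\leftarrow}(1-1/t)$ and to the i.i.d.\ standard-Pareto variables $V_j:=1/(1-U_j)$, so that $X_{(n-k+1)}=U\big(V_{(n-k+1)}\big)$. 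Applying the Rényi representation to $(1-U_j)$ (equivalently, the exponential-spacings representation of uniform order statistics) gives $\big(1-U_{(n)},\dots,1-U_{(n-r+1)}\big)\stackrel{d}{=}\big(\Gamma_1/\Gamma_{n+1},\dots,\Gamma_r/\Gamma_{n+1}\big)$ with $\Gamma_k=\tilde E_1+\dots+\tilde E_k$ and $(\tilde E_j)$ i.i.d.\ $\text{Exp}(1)$; hence $\big(V_{(n)}/n,\dots,V_{(n-r+1)}/n\big)\stackrel{d}{=}\big(\Gamma_{n+1}/(n\Gamma_1),\dots,\Gamma_{n+1}/(n\Gamma_r)\big)$. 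Since $\Gamma_{n+1}/n\to1$ almost surely by the strong law of large numbers, this vector converges in distribution to $\big(1/\Gamma_1,\dots,1/\Gamma_r\big)$, and by Skorokhod's representation theorem we may realise everything on one space so that $V_{(n-k+1)}/n\to 1/\Gamma_k$ almost surely, jointly for $k=1,\dots,r$, with $(\Gamma_1,\dots,\Gamma_r)$ the partial sums of an i.i.d.\ $\text{Exp}(1)$ sequence $(E_j)$.

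\textbf{From the tail quantile function to the limit.} The second ingredient is the classical equivalence: $F\in\text{MDA}(\Lambda)$ if and only if there is a positive auxiliary function $a(\cdot)$ with
\begin{align*}\lim_{t\to\infty}\frac{U(tx)-U(t)}{a(t)}=\log x\,,\qquad x>0\,,\end{align*}
and, because $U$ is monotone and the limit $x\mapsto\log x$ is finite and continuous, this convergence is locally uniform on $(0,\infty)$ by a Pólya-type uniform-convergence argument. Choosing $b_n:=U(n)$ and $a_n:=a(n)$, I would then write
\begin{align*}a_n^{-1}\big(X_{(n-k+1)}-b_n\big)=\frac{U\big(n\cdot(V_{(n-k+1)}/n)\big)-U(n)}{a(n)}\,,\end{align*}
and, since the random argument $V_{(n-k+1)}/n$ converges almost surely to the fixed positive number $1/\Gamma_k$, local uniform convergence turns the right-hand side into $\log(1/\Gamma_k)=-\log\Gamma_k$ almost surely, jointly for $k=1,\dots,r$. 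Reading off $\Gamma_1=E_1$, $\Gamma_2=E_1+E_2$, and so on, this is exactly the asserted limit vector $\big(-\log E_1,-\log(E_1+E_2),\dots,-\log(E_1+\dots+E_r)\big)$, and almost sure convergence implies convergence in distribution.

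\textbf{Main obstacle and an alternative.} The one genuinely non-routine point is the upgrade from the pointwise $\text{MDA}(\Lambda)$ relation to \emph{locally uniform} convergence of $(U(tx)-U(t))/a(t)$: it is needed precisely because the argument is itself random, and it rests on the monotonicity of $U$ together with the continuity of the limit; the remaining measure-zero issues with $F^{\leftarrow}$ at atoms or flat stretches are harmless because the upper tail of any $F\in\text{MDA}(\Lambda)$ is eventually continuous and strictly increasing. As an aside, the whole proposition also follows from point-process convergence: $\sum_{j=1}^n\delta_{a_n^{-1}(X_j-b_n)}$ converges vaguely on a right neighbourhood of $+\infty$ to a Poisson random measure with mean measure $e^{-x}\,\text{d}x$, whose atoms become, under the change of variable $u=e^{-x}$, the successive partial sums $\Gamma_1<\Gamma_2<\cdots$ of i.i.d.\ $\text{Exp}(1)$ variables, so that the top $r$ atoms are $-\log\Gamma_1>\dots>-\log\Gamma_r$. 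I would nevertheless prefer the Rényi-representation route above, as it is more elementary and matches the spirit of this section.
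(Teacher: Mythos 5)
Your proposal is correct and follows essentially the same route the paper sketches via Thm.\ 2.1.1 of de Haan and Ferreira: the R\'{e}nyi representation giving the joint limit of the top uniform/exponential order statistics (equivalently $n(1-U_{(n-k+1)})\to\Gamma_k$), combined with the analytic characterization of $\text{MDA}(\Lambda)$ through the tail quantile function, a change of variables and a locally uniform convergence/extended continuous mapping step. The only quibble is the aside that any $F\in\text{MDA}(\Lambda)$ has an eventually continuous, strictly increasing upper tail — this is neither true in general nor needed, since monotonicity of $U$ together with continuity of $x\mapsto\log x$ already carries the P\'{o}lya-type uniformity argument.
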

This is a special case of Thm.\ 2.1.1 from \cite{haan} for distributions in the MDA of a Gumbel distribution. If $X_j\sim\mathcal{N}(0,1)$, the sequences $(a_n)$ and $(b_n)$ coincide with the ones from \eqref{gumbel}. In particular, $-\log({E_1})$ has a Gumbel distribution. The main ingredient of the proof of Proposition \ref{order} is the convergence in distribution
\[n\,\big(\tilde E_{(1)},\tilde E_{(2)},\ldots,\tilde E_{(r)}\big)\stackrel{d}{\longrightarrow}\big( E_1, E_1+E_2,\ldots, \sum_{j=1}^r E_j\big)\,,\,n\to\infty, r\;\text{fix},\]
for $(\tilde E_j)$ i.i.d.\ $\text{Exp}(1)$, which is directly implied by R\'{e}nyi's representation. With a standard analytical condition for extreme value convergence, a change of variables and an (extended) continuous mapping theorem this yields the result.

Based on Proposition \ref{order}, we derive that
\begin{align}\label{exptest}\sqrt{2\log(n)}\big(X_{(n)}-X_{(n-r)}\big)\todl \log\Big(\frac{E_1+\ldots E_{r+1}}{E_1}\Big)\,.\end{align}
This motivates an interesting alternative to the Gumbel test to construct a test based on \emph{differences} of ordered normalized increments related to the distribution of $(X_{(n)}-X_{(n-r)})$, e.g., for $r=1$. Under jumps from a distribution with a Lebesgue density such a test will attain analogous asymptotic properties. The asymptotic distribution under the null hypothesis is, however, simpler, as it does not require the sequence $(b_n)$ any more. In view of an arduous discussion about the finite-sample fit of different, asymptotically equivalent variants of $(b_n)$, and that incorrect normalizations of the Gumbel test led to some problems in the applied literature, cf.\ \cite{nunes}, the advantage of getting rid of $(b_n)$ in determining critical values of a jump test should not be underestimated.

More reasons to explore this path are in the beauty of the \emph{joint} limit distribution of differences of extreme order statistics, again related to R\'{e}nyi's representation, and an improved, simplified detection of jumps under the alternative hypothesis. The latter implies a practical improvement compared to a sequential application of the Gumbel test which I expect to be of relevance for the current analysis of high-frequency data. We establish the main result along three auxiliary lemmas on the limit distributions in Proposition \ref{order} and \eqref{exptest}. They are suitable as exercises in courses on probability and analysis.

Our first auxiliary result shows that interesting transformations of exponential random variables yield again exponential distributions. 
\begin{lem}Let $E_1,\ldots,E_N$  be i.i.d.\ $\text{Exp}(1)$. Then
\begin{align}\label{Exp}\log\Big(1+\frac{E_{r+1}}{\sum_{j=1}^r E_j}\Big)\sim \text{Exp}(r)\,, \end{align}
for any $r$, $1\le r\le N-1$. In particular, 
\begin{align}\log\Big(1+\frac{E_{2}}{E_1}\Big)\sim \text{Exp}(1)\,. \end{align}
\end{lem}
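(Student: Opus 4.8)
The plan is to compute the survival function of $Y_r:=\log\bigl(1+E_{r+1}/\sum_{j=1}^{r}E_j\bigr)$ directly and to recognise it as the one of $\text{Exp}(r)$. Write $S_r:=\sum_{j=1}^{r}E_j$, which is independent of $E_{r+1}$ and has a Gamma distribution with shape $r$ and scale $1$, so that its Laplace transform is $\E[e^{-\lambda S_r}]=(1+\lambda)^{-r}$ for every $\lambda>-1$. Since $t\mapsto e^{t}-1$ is strictly positive on $(0,\infty)$, for $t>0$ we have the equivalence $Y_r>t\iff E_{r+1}>(e^{t}-1)S_r$.

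Next I would condition on $S_r$. By independence and the explicit exponential tail, $\P\bigl(E_{r+1}>(e^{t}-1)S_r\mid S_r\bigr)=\exp\bigl(-(e^{t}-1)S_r\bigr)$ almost surely. Taking expectations and inserting the Gamma Laplace transform with $\lambda=e^{t}-1>0$ gives
\[
\P(Y_r>t)=\E\bigl[e^{-(e^{t}-1)S_r}\bigr]=\bigl(1+(e^{t}-1)\bigr)^{-r}=e^{-rt},\qquad t>0,
\]
which is exactly the tail function of $\text{Exp}(r)$. This proves \eqref{Exp}; the displayed special case is $r=1$.

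An equivalent route avoids Laplace transforms and uses the beta--gamma algebra: $S_r/(S_r+E_{r+1})\sim\text{Beta}(r,1)$, i.e.\ it has density $r\,u^{r-1}$ on $(0,1)$ and hence is distributed as $V^{1/r}$ for $V\sim\text{Uniform}(0,1)$. Since $\log(1+E_{r+1}/S_r)=-\log\bigl(S_r/(S_r+E_{r+1})\bigr)$, it equals $-r^{-1}\log V$ in distribution, and $-\log V\sim\text{Exp}(1)$, so $-r^{-1}\log V\sim\text{Exp}(r)$.

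There is essentially no real obstacle here: the only points requiring a moment's care are the independence of $E_{r+1}$ and $S_r$ (so that the conditional probability factorises) and the elementary identification of the $\text{Gamma}(r,1)$ Laplace transform, both of which are standard. The statement is a convenient building block because, combined with Proposition~\ref{order} and the representation \eqref{exptest}, it converts the asymptotic law of normalized spacings of extreme order statistics into exact exponential laws, which is what makes the critical values of the R\'{e}nyi-type test explicit.
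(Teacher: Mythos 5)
Your proof is correct, and it takes a genuinely different route from the paper. The paper works at the level of densities: it first derives the Lebesgue density $r/(1+z)^{r+1}$ of the ratio $E_{r+1}/\sum_{j=1}^r E_j$ via the change-of-variables formula for a quotient of independent variables (using the $\text{Gamma}(r,1)$ density of the sum), and then applies a second, one-dimensional change of variables under $z\mapsto\log(1+z)$ to read off the $\text{Exp}(r)$ density. You instead compute the survival function in one stroke: conditioning on $S_r=\sum_{j=1}^r E_j$ and using the exponential tail gives $\P(Y_r>t)=\E[e^{-(e^t-1)S_r}]$, which the Gamma Laplace transform evaluates to $e^{-rt}$; your alternative via the beta--gamma algebra ($S_r/(S_r+E_{r+1})\sim\text{Beta}(r,1)$, so $-\log$ of it is $\text{Exp}(r)$) is equally valid. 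Your argument is shorter and avoids writing down any intermediate density, which is a genuine economy; what the paper's density route buys is that the explicit quotient density and the change-of-variables/Jacobian machinery are reused immediately afterwards in the independence lemmas (Lemmas \ref{indstart} and \ref{indstep}), and indeed your beta--gamma observation is essentially the marginal content of Lemma \ref{indstep}, whose real point is the joint independence of the ratios, which your tail computation alone does not deliver.
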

\begin{proof}
For some non-negative, independent random variables $X$ and $Y$, with Lebesgue densities $f_X$ and $f_Y$, the change of variables
\begin{align*}
\P(X/Y\le z)&=\int_0^{\infty}\int_0^{\infty}\1\{x/y\le z\}f_X(x)f_Y(y)\text{d} x\text{d} y\\
&=\int_0^{\infty}\1\{q\le z\}\Big(\int_0^{\infty}f_X(qy)f_Y(y)y\,\text{d} y\Big)\text{d} q\,,z>0,
\end{align*}
yields the Lebesgue density of the ratio $X/Y$. With the density of the \text{Gamma}($r$,1) distribution of $\sum_{j=1}^r E_j$, i.e., the $r$th convolution of $\text{Exp}(1)$, and independence, we obtain for the density $g$ of $E_{r+1}/\sum_{j=1}^r E_j$:
\begin{align*}
g(z)&=\int_0^{\infty}\exp(-yz)\frac{y^{r-1}}{\Gamma(r)}\exp(-y)y\,\text{d}y\\
&=\int_0^{\infty}\exp(-y(z+1))\frac{y^{r}}{\Gamma(r)}\text{d}y\\
&=\frac{r}{(z+1)^{r+1}}\,,z>0.
\end{align*}
The last identity is implied by the known moments of an exponential distribution with $\lambda=(z+1)$. Since $z\mapsto \log(1+z)$ has inverse $u\mapsto \exp(u)-1$, with derivative $\exp(u)$, a change of variables yields that $U=\log\Big(1+\frac{E_{r+1}}{\sum_{j=1}^r E_j}\Big)$ has the Lebesgue density
\begin{align*}
f_U(u)=\exp(u)\cdot\frac{r}{\exp(u(r+1))}=r\cdot\exp(-ru)\,,u>0.
\end{align*}
Hence, $U\sim \text{Exp}(r)$.
\end{proof}
I find it even more interesting that, although the same random variables enter the transformation \eqref{Exp} for different $r$, we have an independence based on the next two auxiliary lemmas. 
\begin{lem}\label{indstart}For $E_1,E_2,E_3$ i.i.d.\ $\text{Exp}(1)$, the random variables $E_1/(E_1+E_2)$, $(E_1+E_2)/(E_1+E_2+E_3)$, and $(E_1+E_2+E_3)$ are independent.
\end{lem}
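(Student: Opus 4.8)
The plan is to reduce this to a single multivariate change of variables, in the spirit of the density computation carried out in the previous lemma. First I would replace $(E_1,E_2,E_3)$ by the partial sums $S_1=E_1$, $S_2=E_1+E_2$, $S_3=E_1+E_2+E_3$. This transformation is linear with Jacobian determinant $1$, and since the joint density of $(E_1,E_2,E_3)$ is $\exp(-(e_1+e_2+e_3))$, the vector $(S_1,S_2,S_3)$ has joint Lebesgue density $\exp(-s_3)$ on the region $\{0<s_1<s_2<s_3\}$ and $0$ elsewhere.

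Next I would pass from $(S_1,S_2,S_3)$ to the three quantities of interest, $U=S_1/S_2$, $V=S_2/S_3$, $W=S_3$, which take values in $(0,1)\times(0,1)\times(0,\infty)$. The inverse map is $S_3=W$, $S_2=VW$, $S_1=UVW$; its Jacobian matrix is upper triangular with diagonal entries $vw$, $w$, $1$, so the Jacobian determinant equals $vw^{2}$. Substituting into the density from the first step, the joint density of $(U,V,W)$ is
\[ f_{U,V,W}(u,v,w)=\exp(-w)\,v\,w^{2}\,\1_{(0,1)}(u)\,\1_{(0,1)}(v)\,\1_{(0,\infty)}(w)\,. \]

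Finally I would read off independence: this density factorizes into a function of $u$ only, a function of $v$ only, and a function of $w$ only, which is exactly the assertion that $U$, $V$, $W$ are mutually independent (with, after renormalizing the constants, $U\sim\text{Uniform}(0,1)$, $V\sim\text{Beta}(2,1)$ and $W\sim\text{Gamma}(3,1)$). Since $E_1/(E_1+E_2)=U$, $(E_1+E_2)/(E_1+E_2+E_3)=V$ and $E_1+E_2+E_3=W$, the lemma follows. I do not expect a genuine obstacle here: the only care needed is in bookkeeping the ranges and computing the Jacobian correctly. The argument is the direct three-dimensional analogue of the density calculation already carried out above, and a straightforward induction on the number of exponential variables yields the corresponding $n$-variable statement.
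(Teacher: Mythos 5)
Your proposal is correct and follows essentially the same route as the paper: a multivariate change of variables to $(U,V,W)$ with Jacobian determinant $vw^2$, giving the joint density $e^{-w}vw^{2}$, which factorizes into the marginals of a uniform, a Beta$(2,1)$ (density $2v$), and a Gamma$(3,1)$ law. The only cosmetic difference is that you pass through the partial sums $(S_1,S_2,S_3)$ first, whereas the paper applies the change of variables to $(E_1,E_2,E_3)$ directly via the inverse map $(u,v,w)\mapsto(uvw,(1-u)vw,(1-v)w)$.
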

\begin{proof}
We show that the joint density equals the product of the marginal densities. Elementary computations yield the Jacobian of the inverse map
\begin{align*}\left(\begin{array}{c}u\\ v\\ w\end{array}\right)\mapsto \left(\begin{array}{c}u\cdot v\cdot w\\ (1-u)\cdot v\cdot w\\ (1-v)\cdot w\end{array}\right)\,,\end{align*}
and its determinant $vw^2$. Based on a (multivariate) change of variables and with the product exponential density of $(E_1,E_2,E_3)$, we obtain the joint density
\begin{align*}
f_{U,V,W}(u,v,w)&=\exp(-uvw)\exp(-(1-u)vw)\exp(-(1-v)w)vw^2\\
&=\exp(-w)vw^2\,,0\le u,v\le 1,w>0\,.
\end{align*}
Since this equals the product of the marginal densities $f_W(w)=w^2e^{-w}/2,w>0$, of the \text{Gamma}(3,1) distribution of $(E_1+E_2+E_3)$, $f_V(v)=2v,v\in[0,1]$, of $(E_1+E_2)/(E_1+E_2+E_3)$, and $f_U(u)=\1\{u\in[0,1]\}$, of the uniform distribution of $E_1/(E_1+E_2)$, we conclude the independence.
\end{proof}
Transformations of independent random variables remain independent. For the general conclusion, we only need to extend Lemma \ref{indstart} what can be done by induction.
\begin{lem}\label{indstep}For $E_1,\ldots,E_{r+1}$, $r\in\N$, i.i.d.\ $\text{Exp}(1)$, the random variables $E_1/(E_1+E_2)$, $\ldots$, $(\sum_{j=1}^r E_j)/(\sum_{j=1}^{r+1}E_j)$, and $(\sum_{j=1}^{r+1}E_j)$ are independent.
\end{lem}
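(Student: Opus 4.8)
The plan is to argue by induction on $r$, carrying along the explicit laws of the variables involved. Write $S_k=\sum_{j=1}^k E_j$, so that the claim is the joint independence of $S_1/S_2,\dots,S_r/S_{r+1}$ and $S_{r+1}$; it is convenient to prove the slightly stronger statement that in addition $S_k/S_{k+1}\sim\text{Beta}(k,1)$ and $S_{r+1}\sim\text{Gamma}(r+1,1)$. The base case is $r=1$, i.e.\ $E_1/(E_1+E_2)$ is independent of $E_1+E_2$, which is the elementary Beta--Gamma splitting already used (in a more general form) in the proof of Lemma~\ref{indstart}; one could equally start from $r=2$, which is exactly Lemma~\ref{indstart}.

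For the inductive step, assume the assertion for $r-1$ and adjoin an independent $E_{r+1}\sim\text{Exp}(1)$. Put $V=(S_1/S_2,\dots,S_{r-1}/S_r)$. By the induction hypothesis the components of $V$ are jointly independent, $V$ is independent of $S_r$, and $S_r\sim\text{Gamma}(r,1)$. Since $(V,S_r)$ is a measurable function of $(E_1,\dots,E_r)$, which is independent of $E_{r+1}$, a one-line factorisation-of-expectations argument upgrades these facts to: $V$, $S_r$, $E_{r+1}$ are \emph{mutually} independent, and in particular $\sigma(V)$ is independent of the whole $\sigma$-algebra $\sigma(S_r,E_{r+1})$. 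Next, as $S_r\sim\text{Gamma}(r,1)$ and $E_{r+1}\sim\text{Exp}(1)=\text{Gamma}(1,1)$ are independent, the change of variables $(x,y)\mapsto(x/(x+y),\,x+y)$, whose inverse has Jacobian determinant $x+y$, shows that $S_r/S_{r+1}$ and $S_{r+1}$ are independent, with $S_r/S_{r+1}\sim\text{Beta}(r,1)$ and $S_{r+1}\sim\text{Gamma}(r+1,1)$. Since $S_r/S_{r+1}$ and $S_{r+1}$ are $\sigma(S_r,E_{r+1})$-measurable, combining this with $\sigma(V)\perp\sigma(S_r,E_{r+1})$ and the induction hypothesis for $V$ gives, for all bounded measurable test functions,
\begin{align*}
\E\Big[\prod_{k=1}^{r-1}f_k\big(\tfrac{S_k}{S_{k+1}}\big)\,g\big(\tfrac{S_r}{S_{r+1}}\big)\,h(S_{r+1})\Big]
=\prod_{k=1}^{r-1}\E\big[f_k\big(\tfrac{S_k}{S_{k+1}}\big)\big]\cdot\E\big[g\big(\tfrac{S_r}{S_{r+1}}\big)\big]\cdot\E\big[h(S_{r+1})\big]\,,
\end{align*}
which is precisely the joint independence claimed for $r$, completing the induction.

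The only point that needs care is this last bookkeeping step: it is not enough to know $V\perp S_r$ and $S_r/S_{r+1}\perp S_{r+1}$ \emph{separately}; one needs $\sigma(V)$ to be independent of \emph{all} of $\sigma(S_r,E_{r+1})$, so that the two ``blocks'' can be multiplied out simultaneously. An alternative route that sidesteps the induction is a single global change of variables: the map $(E_1,\dots,E_{r+1})\mapsto(S_1/S_2,\dots,S_r/S_{r+1},S_{r+1})$ factors through the unit-Jacobian map $(E_1,\dots,E_{r+1})\mapsto(S_1,\dots,S_{r+1})$ followed by $(S_1,\dots,S_{r+1})\mapsto(S_1/S_2,\dots,S_r/S_{r+1},S_{r+1})$, whose inverse $S_m=R_m R_{m+1}\cdots R_r W$, $S_{r+1}=W$ is triangular with diagonal entries $S_2,\dots,S_{r+1},1$; feeding the joint density $e^{-S_{r+1}}$ of $(E_1,\dots,E_{r+1})$ through this substitution yields the joint density $\big(\prod_{k=2}^r R_k^{\,k-1}\big)\,w^{r}e^{-w}$ of $(R_1,\dots,R_r,W)$ on $[0,1]^r\times(0,\infty)$, which is visibly a product of a uniform, Beta$(k,1)$ and Gamma$(r+1,1)$ density. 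For $r=2$ this reproduces the computation in the proof of Lemma~\ref{indstart}.
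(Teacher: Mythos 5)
Your proof is correct, but your primary argument takes a genuinely different route from the paper's. The paper proves the lemma by a single $(r+1)$-dimensional change of variables: it writes down the inverse map sending $(u_1,\ldots,u_{r+1})$ to $(E_1,\ldots,E_{r+1})$, computes the Jacobian determinant $\prod_{j=1}^{r}u_j^{j-1}\cdot u_{r+1}^{r}$ by induction via a Laplace expansion of the Jacobian matrix, and reads off independence from the factorized joint density --- essentially the computation you sketch as your ``alternative route'', except that your factorization through the partial sums $(S_1,\ldots,S_{r+1})$ makes the second map triangular and so replaces the paper's inductive determinant computation by an immediate one. Your main argument instead inducts on $r$ at the level of distributions: each step only invokes the two-dimensional Beta--Gamma splitting for independent $\text{Gamma}(r,1)$ and $\text{Exp}(1)$ variables, and the one nontrivial point --- which you correctly identify and handle --- is the block-independence bookkeeping, namely that the induction hypothesis together with independence of $(E_1,\ldots,E_r)$ from $E_{r+1}$ yields that $\sigma(V)$, with your $V=(S_1/S_2,\ldots,S_{r-1}/S_r)$, is independent of the whole of $\sigma(S_r,E_{r+1})$, not merely the pairwise statements. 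What your induction buys is that no multivariate Jacobian is ever needed and the marginal laws ($\text{Beta}(k,1)$ for the ratios, $\text{Gamma}(r+1,1)$ for the sum) come along for free, in a style close to how the paper itself derives R\'{e}nyi's representation from memorylessness; what the paper's (and your alternative) global change of variables buys is a one-shot explicit joint density from which independence and all marginals are visible simultaneously, with no measure-theoretic grouping argument required.
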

\begin{proof}
From the inverse map
\begin{align*}\left(\begin{array}{c}u_1\\ u_2\\ \vdots \\ u_r\\ u_{r+1}\end{array}\right)\mapsto \left(\begin{array}{c}\prod_{j=1}^{r+1}u_j\\ (1-u_1)\prod_{j=2}^{r+1}u_j\\ \vdots\\ (1-u_{r-1})\cdot u_r\cdot u_{r+1}\\ (1-u_{r})\cdot u_{r+1}\end{array}\right)\,,\end{align*}
we infer by induction the Jacobian
\begin{align*}
J^{r+1,r+1}=\left(\begin{array}{cccc}J_{11}^{r,r}\cdot u_{r+1}&\ldots&J_{1r}^{r,r}\cdot u_{r+1}&\prod_{j=1}^r u_j\\ J_{21}^{r,r}\cdot u_{r+1}&\ldots & J_{2r}^{r,r}\cdot u_{r+1}&(1-u_1)\prod_{j=2}^ru_j\\ \vdots&&\vdots&\vdots\\ J_{r1}^{r,r}\cdot u_{r+1}&\ldots &J_{rr}^{r,r}\cdot u_{r+1}&(1-u_{r-1})\cdot u_r\\ 0&\ldots & -u_{r+1}&(1-u_r)\end{array}\right)\,.
\end{align*}
We write $A_{ij}$ for the entry in the $i$th row and $j$th column of some matrix $A$. Based on a Laplace expansion with respect to the last line, we obtain
\begin{align*}
\det J^{r+1,r+1}&=(1-u_r)\cdot u_{r+1}^r\cdot \det J^{r,r}+u_{r+1}\cdot u_r\cdot u_{r+1}^{r-1}\cdot \det J^{r,r}\\
&=u_{r+1}^r\cdot \det J^{r,r}=\prod_{j=1}^r u_j^{j-1} \cdot u_{r+1}^r\,.
\end{align*}
With a telescoping sum in the exponent, similar as in the proof of Lemma \ref{indstart}, we obtain the joint density $g$ of $U_1,\ldots,U_{r+1}$:
\begin{align*}
g(u_1,\ldots,u_{r+1})=\frac{\exp(-u_{r+1})}{r!}\cdot u_{r+1}^r\cdot \prod_{j=1}^rj\cdot u_j^{j-1}\,,\,u_1,\ldots,u_r\in[0,1],\,u_{r+1}>0,
\end{align*}
which equals the product of the marginal densities $f_{U_{r+1}}(u_{r+1})=e^{-u_{r+1}}u_{r+1}^r/{r!}$, and $f_{U_j}(u_j)=j\cdot u_j^{j-1}$, $j\in\{1,\ldots,r\}$. 
\end{proof}
Since the right and left tail behavior of the normal distribution are symmetric and since the differences between subsequent extreme order statistics dominate the ones of intermediate order statistics, the auxiliary lemmas and Proposition \ref{order} suffice to conclude the main result.
\begin{theo}For $(X_j)_{1\le j\le n}$ i.i.d.\ $\mathcal{N}(0,1)$, it holds that
\begin{align}\label{dehtest}\lim_{n\to\infty}\P\Big(\sqrt{2\log(n)}\max_{2\le j\le n}\big(X_{(j)}-X_{(j-1)}\big)\le x\Big)=\prod_{j=1}^{\infty}\big(1-\exp(-j\cdot x)\big)^2\,.\end{align}
\end{theo}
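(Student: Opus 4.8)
The plan is to show that, after scaling by $\sqrt{2\log n}=a_n^{-1}$, the maximal spacing is asymptotically attained among the $r$ largest and the $r$ smallest order statistics for $r$ large — where it is governed by Proposition~\ref{order} and the auxiliary lemmas — while the spacings among the remaining, ``interior'' order statistics are negligible. Fix $x>0$, set $\Gamma_m=E_1+\dots+E_m$ for i.i.d.\ $\text{Exp}(1)$ variables $(E_j)$, and recall that for $X_j\sim\mathcal N(0,1)$ the norming sequences in Proposition~\ref{order} are those of \eqref{gumbel}.

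\emph{Step 1 (top spacings).} By Proposition~\ref{order} (applied with $r+1$ in place of $r$), the vector $a_n^{-1}\big(X_{(n)}-b_n,\dots,X_{(n-r)}-b_n\big)$ converges jointly in distribution to $\big(-\log\Gamma_1,\dots,-\log\Gamma_{r+1}\big)$. Taking successive differences cancels $b_n$, and the continuous mapping theorem yields
\[
\sqrt{2\log n}\,\big(X_{(n-k+1)}-X_{(n-k)}\big)_{1\le k\le r}\ \todl\ \big(\log(\Gamma_{k+1}/\Gamma_k)\big)_{1\le k\le r}=:(G_1,\dots,G_r).
\]
By \eqref{Exp}, $G_k=\log\!\big(1+E_{k+1}/\Gamma_k\big)\sim\text{Exp}(k)$; and since $G_k=-\log(\Gamma_k/\Gamma_{k+1})$ depends only on the ratio $\Gamma_k/\Gamma_{k+1}$, Lemma~\ref{indstep} shows $G_1,\dots,G_r$ are independent. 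Hence, applying $\max$,
\[
\sqrt{2\log n}\max_{n-r<j\le n}\big(X_{(j)}-X_{(j-1)}\big)\ \todl\ \max_{1\le k\le r}G_k,\qquad \P\big(\textstyle\max_{1\le k\le r}G_k\le x\big)=\prod_{k=1}^{r}\big(1-e^{-kx}\big).
\]

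\emph{Step 2 (bottom spacings and the square).} Since $(-X_j)$ is again an i.i.d.\ $\mathcal N(0,1)$ sample and the small-index spacings $X_{(j)}-X_{(j-1)}$, $2\le j\le r+1$, are exactly the top spacings of $(-X_j)$, Step~1 applies verbatim and gives that $\sqrt{2\log n}\max_{1<j\le r+1}(X_{(j)}-X_{(j-1)})\todl\max_{1\le k\le r}\tilde G_k$ with $\tilde G_k\sim\text{Exp}(k)$ independent. Moreover the upper and lower extreme order statistics of an i.i.d.\ sample are asymptotically independent — most transparently, the exceedance point processes $\sum_j\delta_{a_n^{-1}(X_j-b_n)}$ and $\sum_j\delta_{a_n^{-1}(-X_j-b_n)}$ converge jointly, over any fixed half-line, to a pair of independent Poisson processes (cf.\ \cite{haan}). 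Hence the top-$r$ and bottom-$r$ spacing vectors converge jointly to independent limits, so
\[
\sqrt{2\log n}\,\max\big(\text{top-}r\ \text{gaps},\ \text{bottom-}r\ \text{gaps}\big)\ \todl\ \Big(\max_{1\le k\le r}G_k\Big)\vee\Big(\max_{1\le k\le r}\tilde G_k\Big),
\]
whose cdf at $x$ equals $\prod_{k=1}^{r}(1-e^{-kx})^2$. Since the full maximum over all $n-1$ spacings dominates this partial one, $\limsup_n\P\big(\sqrt{2\log n}\max_{2\le j\le n}(X_{(j)}-X_{(j-1)})\le x\big)\le\prod_{k=1}^{r}(1-e^{-kx})^2$ for every $r$; letting $r\to\infty$ (the product converges since $\sum_k e^{-kx}<\infty$) gives the upper bound $\big(\prod_{k\ge1}(1-e^{-kx})\big)^2$.

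\emph{Step 3 (interior spacings — the main obstacle).} It remains to show that for every $\varepsilon>0$ there is $r$ with $\limsup_n\P\big(\sqrt{2\log n}\max_{r+1<j\le n-r}(X_{(j)}-X_{(j-1)})>x\big)\le\varepsilon$; then $\liminf_n\P(\text{full}\le x)\ge\prod_{k=1}^{r}(1-e^{-kx})^2-\varepsilon\ge\big(\prod_{k\ge1}(1-e^{-kx})\big)^2-\varepsilon$, which for $\varepsilon\downarrow0$ matches the $\limsup$ bound and proves \eqref{dehtest}. By the $X\mapsto-X$ symmetry it suffices to bound the upper interior block, i.e.\ $X_{(n-k)}-X_{(n-k-1)}$ for $r\le k\le n/2$. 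Using $U_{(i)}=\Phi(X_{(i)})$ together with the exponential representation $\big(1-U_{(n-k)},\,U_{(n-k)}-U_{(n-k-1)}\big)\stackrel{d}{=}\big(\Gamma_{k+1}/\Gamma_{n+1},\,E_{k+2}/\Gamma_{n+1}\big)$ jointly, the mean value theorem for $\Phi^{-1}$, monotonicity of $\varphi$ on $[0,\infty)$, and Mills' asymptotics $\varphi(y)\sim y(1-\Phi(y))$, one obtains, on an event of probability $\ge1-\varepsilon$ uniformly in $k$ once $r$ is large (the uniform controls on $\inf_{k\ge r}\Gamma_{k+1}/k$ and on the lower envelope of $X_{(n-k)}$ follow from the SLLN and standard maximal inequalities),
\[
\sqrt{2\log n}\,\big(X_{(n-k)}-X_{(n-k-1)}\big)\ \lesssim\ \frac{\sqrt{\log n}}{\Gamma_{k+1}\sqrt{\log(\Gamma_{n+1}/\Gamma_{k+1})}}\,E_{k+2}.
\]
For $r\le k\le\sqrt n$ one has $\log(\Gamma_{n+1}/\Gamma_{k+1})\ge\tfrac12\log n$ with high probability, so the bound is $\lesssim E_{k+2}/k$, and $\P\big(\max_{k\ge r}E_{k+2}/k>x\big)\le\sum_{k\ge r}e^{-kx}\to0$ as $r\to\infty$; for $\sqrt n\le k\le n/2$ one has $X_{(n-k)}\in[0,\sqrt{\log n}\,(1+o(1))]$, so $\varphi(X_{(n-k)})\ge n^{-1/2+o(1)}$, while the uniform spacings are $\mathcal O(\log n/n)$ uniformly with high probability, whence $\sqrt{2\log n}\max_{\sqrt n\le k\le n/2}(X_{(n-k)}-X_{(n-k-1)})=\mathcal O\big((\log n)^{3/2}n^{-1/2}\big)\to0$ in probability. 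The genuinely delicate point — the heart of the argument — is the uniformity in $k$ of these quantile estimates, hence the precise behaviour of $\Phi^{-1}$ near $1$; the payoff is the $1/k$-decay of the $k$-th edge spacing, which makes the union bound over $k\ge r$ summable.
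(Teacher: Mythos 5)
Your proposal is correct and takes essentially the same route as the paper: Proposition \ref{order} and the auxiliary lemmas give the limit of the extreme spacings as independent $\text{Exp}(k)$ variables, the symmetry of the two normal tails produces the square, and the remaining task is the negligibility of the interior spacings. The only difference is one of completeness rather than of method: you actually carry out the two steps the paper merely asserts -- the asymptotic independence of the upper and lower tail spacings and the domination of the extreme spacings over the interior ones (your Step 3, via the uniform/exponential representation and quantile asymptotics) -- which the paper leaves to the cited, more technical proof of Deheuvels.
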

The result combines \eqref{exptest} with the non-obvious, asymptotic independence of the differences. Note that the cdf of independent random variables equals the product of their cdfs. Since the differences are not identically distributed, the limit distribution does not belong to the class of standard extreme value distributions for maxima of i.i.d.\ sequences. Nevertheless, the limit cdf is remarkably simple and intuitive what I was not aware of before exploring this path. After (re-)discovering this result, I expected that it has been discussed in the literature and it was not difficult to find it as Thm.\ 1 in \cite{Deheuvels}. With the main focus on a related law of the iterated logarithm, \cite{Deheuvels} provides a rigorous, more technical and less intuitive proof of the convergence \eqref{dehtest}. I will therefore call the limit \emph{Deheuvels distribution}. The square on the right-hand side of \eqref{dehtest} is due to the symmetry of the tails. Looking only at one of the tails, we obtain the limit cdf without the square. This is useful when testing for positive and negative jumps separately. In order to compute quantiles based on \eqref{dehtest}, one can approximate the infinite product by a finite one up to some cut-off, or, even simpler, approximate it by $1-\exp(-x)-\exp(-2x)$, for $x$ not too small. This approximation exploits a telescoping sum and is very precise for all relevant quantiles. 

\begin{figure}[t]\begin{framed}
\includegraphics[width=4.45cm]{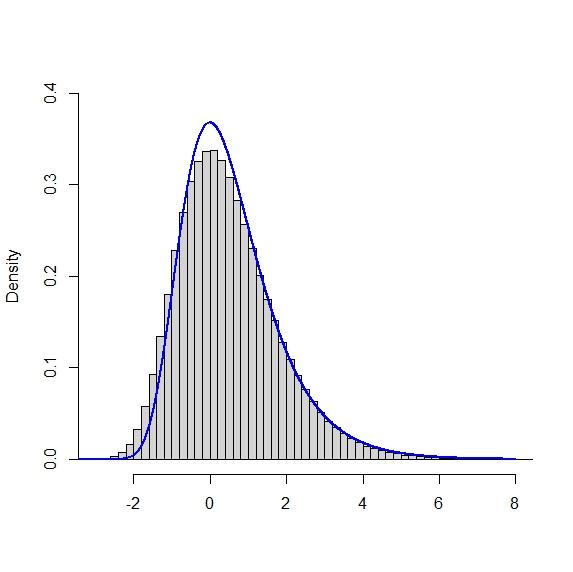}\includegraphics[width=4.45cm]{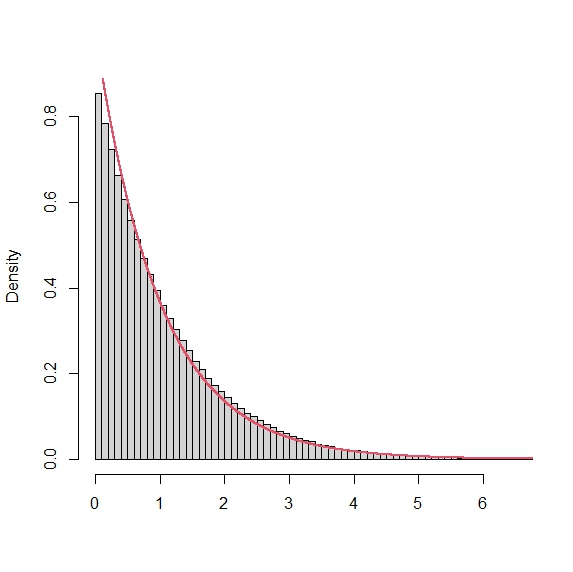}\includegraphics[width=4.45cm]{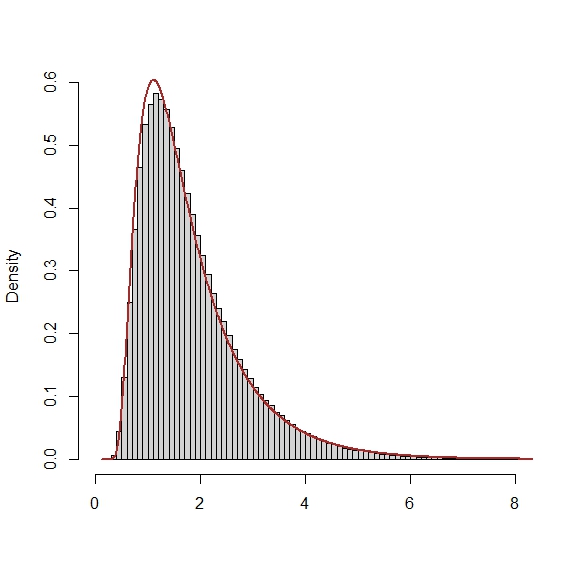}
\end{framed}
\caption{\label{Fig:Histo}Histograms of the statistics for $n=3{,}600$ from 1{,}000{,}000 Monte Carlo iterations and the densities of their asymptotic {\color{blue}{Gumbel}}, {\color{red}{exponential}} and {\color{darkbrown}{Deheuvels}} distributions.}
\end{figure}

Figure \ref{Fig:Histo} compares for $(X_j)_{1\le j\le n}$ i.i.d.\ $\mathcal{N}(0,1)$ histograms of the statistics
\begin{enumerate}
\item $\sqrt{2\log(n)}\cdot \big(X_{(n)}-\sqrt{2\log(n)}+\log\big(4\pi\log(n))/(2\sqrt{2\log(n)})\big)$ left-hand side,
\item $\sqrt{2\log(n)}\cdot \big(X_{(n)}-X_{(n-1)}\big)$ in the middle,
\item $\sqrt{2\log(n)}\cdot \max_{2\le j\le n}\big(X_{(j)}-X_{(j-1)}\big)$ right-hand side,
\end{enumerate}
for finite sample size $n=3{,}600$, corresponding to one price observation per second over one hour, based on a Monte Carlo simulation with 1{,}000{,}000 iterations, to the densities of the limit {\color{blue}{standard Gumbel}}, {\color{red}{standard exponential}} and {\color{darkbrown}{Deheuvels}} distributions. The derivative of the infinite product not having a nice closed form, I use a numerical approximation with Richardson's extrapolation to evaluate the density.

Crucial for the test is the precision of the fit in the high quantiles. We illustrate it based on our Monte Carlo simulation in Figure \ref{Fig:QQ} plotting empirical $(90+j)$\% percentiles, $0\le j\le 9$, against their theoretical asymptotic counterparts. As common in quantile-quantile (q-q) plots, we draw a diagonal line and the closer the points are to the diagonal, the better the fit by the limit distribution. We see that all three limit distributions fit the empirical, finite-sample distributions reasonably well. In fact, the fit for the differences of order statistics are better than that of the Gumbel distribution. I did, however, not try different variants of $(b_n)$ here which could further improve the Gumbel approximation, cf.\ \cite{nunes}. 
\begin{figure}[t]\begin{framed}
\includegraphics[width=4.45cm]{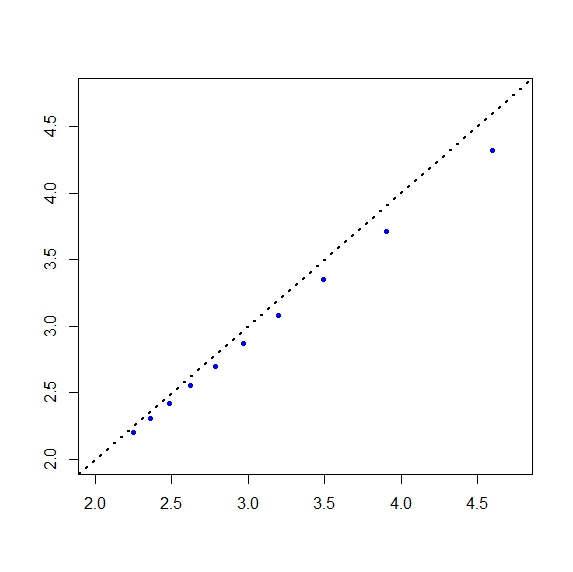}\includegraphics[width=4.45cm]{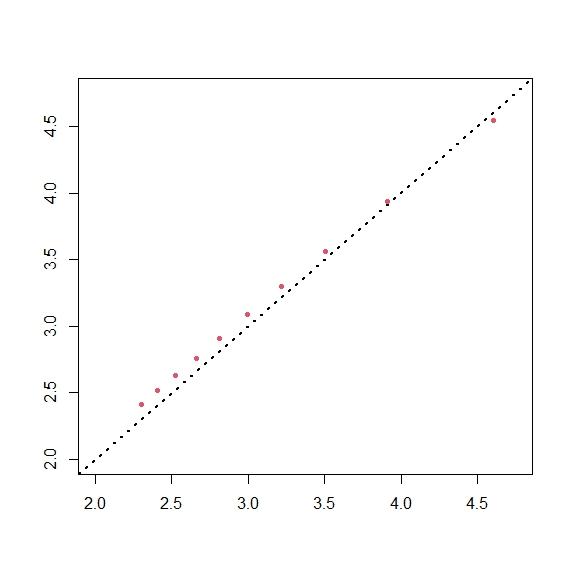}\includegraphics[width=4.45cm]{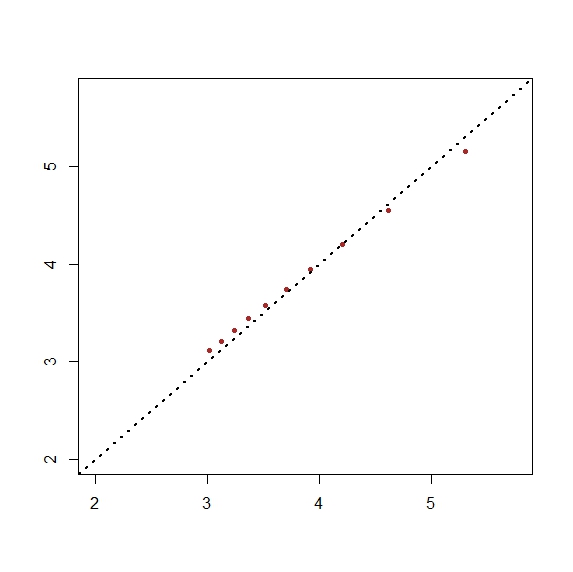}
\end{framed}
\caption{\label{Fig:QQ}Q-q plots with $(90+j)$\% percentiles, $0\le j\le 9$, of the statistics for $n=3{,}600$ from 1{,}000{,}000 Monte Carlo iterations compared to the asymptotic {\color{blue}{Gumbel}}, {\color{red}{exponential}} and {\color{darkbrown}{Deheuvels}} distributions.}
\end{figure}

We finish this section with our new \emph{R\'{e}nyi test} for jumps. Based on
\begin{align*}\sqrt{2\log(n)}\cdot \max_{2\le j\le n}\big(\Delta^n \hat X_{(j)}-\Delta^n \hat X_{(j-1)}\big) \todl \mathfrak{D}\,,\end{align*}
where $\mathfrak{D}$ is the Deheuvels distribution and \(\Delta^n \hat X=n^{1/2}(\Delta_1^n X/\hat\sigma_{1/n},\ldots, \Delta_n^n X/\hat\sigma_{1})\) the vector of normalized increments, we reject the null if the statistic left-hand side exceeds the $(1-\alpha)$ quantile of the Deheuvels distribution. The test has asymptotic level $\alpha$ and achieves the same rate of convergence as the Gumbel test.
 
In order to detect several jumps, the Gumbel test can be performed sequentially. In case of rejection, the time of the largest jump is estimated with the argmaximum. After discarding the largest absolute increment, the test is applied again. In case of another rejection, the next jump time is estimated. This is iterated until the test does not reject any more. For the R\'{e}nyi test, there is a similar sequential application. In case of rejection, however, we can readily ascribe all increments above or below the maximal difference of the order statistics to jumps. Since the maximum can be taken between several increments which contain jumps, we nevertheless apply another test which may be based on \eqref{exptest}.

\section{Is volatility rough?\label{sec:4}}
A fractional Brownian motion (fBm), $(B_t^H)_{t\ge 0}$, with \emph{Hurst exponent} $H\in(0,1)$, is a Gaussian process with continuous paths uniquely determined by $\E[B_t^H]=0$ for all $t$, and  
\[\E[B_t^H\,B_s^H]=\frac12 (t^{2H}+s^{2H}-|t-s|^{2H})\,,t,s\ge 0\,.\]
$(B_t^H)$ has stationary Gaussian increments $(B^H_t-B^H_s)\sim N(0,|t-s|^{2H})$, which are positively correlated for $H>1/2$, and negatively correlated for $H<1/2$. Except the case of a standard Brownian motion when $H=1/2$, increments are thus not independent and $(B_t^H)$ is not a Markov process and also not a semi-martingale. The fBm is self-similar with index $H$ given by the Hurst exponent, such that $a^{-H}B^H_{at}$ is distributed as $B_t^H$ for all $a>0$. Interested readers find a nice survey about fBm in \cite{nourdin}. Harold Edwin Hurst was in fact not a mathematician, but a British hydrologist who empirically found \emph{long-range dependence} in a time series of his measurements of the water level in the Nile river. Long-range dependence refers to a high degree of persistence in the data and after fBm was introduced by \cite{mandelbrot}, it can be modelled  by a fBm with large Hurst exponents. Such long memory was attributed in finance to volatility processes and \cite{comte} suggested a fractional Ornstein-Uhlenbeck process, with $H>1/2$, as a model for the \emph{log-}volatility. The Hurst exponent determines at the same time the regularity of the process in Assumption \ref{volass} and by the Kolmogorov-Chentsov continuity theorem the paths are Hölder continuous for any index strictly smaller than $H$. A recent strand of literature considers a \emph{rough fractional stochastic volatility} model built on the same kind of processes but with small Hurst exponents $H<1/2$. This development was initiated by \cite{roughvola} and is mainly motivated by empirical evidence. It is important to point out that related literature is looking at volatility processes over longer time periods and not on an intra-daily basis over, e.g., just one single day. The strategy of \cite{roughvola} is to consider a time series of realized volatilities based on high-frequency, intra-daily data over some longer period. Modelling integrated volatilities, or realized volatilities directly, by a fractional process, the latent volatility becomes observable, either directly or with negligible noise from the estimation. Based on $\sigma_{j\Delta},0\le j\le n$, they study the statistics
\begin{align}\label{mq}m(q,\Delta) = n^{-1}\sum_{k=1}^n\left|\log(\sigma_{k\Delta})-\log(\sigma_{(k-1)\Delta})\right|^q\,,q>0.\end{align}
\begin{figure}[t]
\includegraphics[width=7cm]{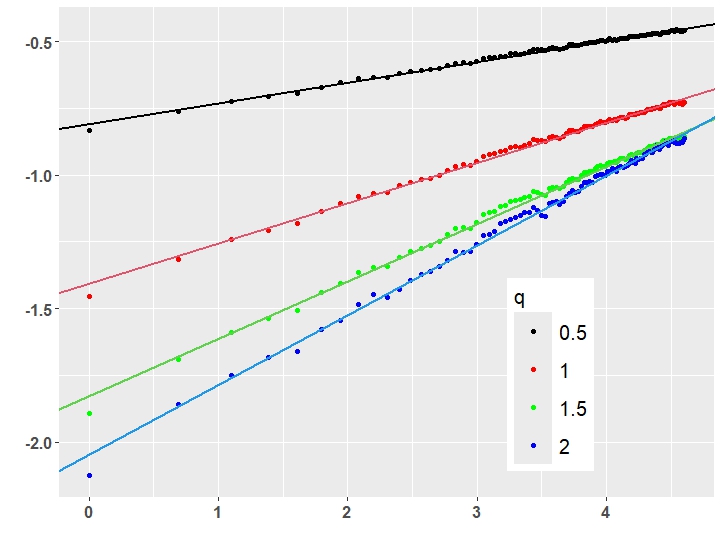}\includegraphics[width=7cm]{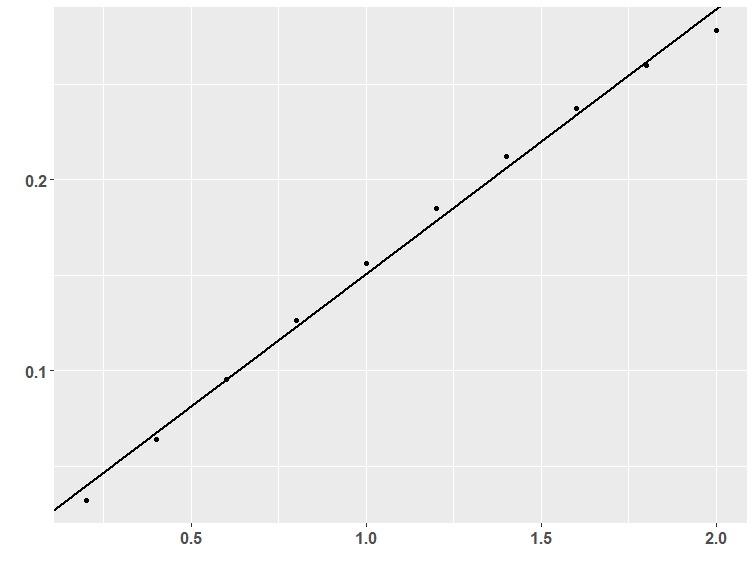}
\caption{\label{Fig:Rough}Left $m(q,\Delta)$ from \eqref{mq} as function of $\log(\Delta)$, right: $\zeta_q$ as a function of $q$.}
\end{figure}

The idea is to perform linear regressions what we motivate here differently than in \cite{roughvola}. Based on the defining properties of fBm above, we see for some time step $\Delta$ and $l,k\in\N$, $l\le k$, that 
\begin{align*}\log|\sigma\cdot B^H_{k\Delta}-\sigma\cdot B^H_{(k-l)\Delta}|&=\log(\sigma)+H\cdot \log(l\Delta)+\log|Z|\,,\end{align*}
with $Z\sim\mathcal{N}(0,1)$. This already resembles the model equation of a linear model, i.e., a linear function of $\log(l\Delta)$ with slope $H$. Having observations $\sigma_{j\Delta},0\le j\le n$, we compute \eqref{mq} over different coarser grids, or equivalently with $\log(\sigma_{k\Delta})-\log(\sigma_{(k-l)\Delta}), 1\le l\le L$, up to some $L\in\N$, and regress $m(q,l\Delta)$ on $\log(l\Delta)$ to estimate intercept and slope with a simple linear regression. If $(\log(\sigma_t))$ was a fBM, or as well if it was a more general fractional process, we expect to find $q\cdot H$ as the slope in these regressions. This and also more refined estimators of the Hurst exponent yield in several empirical studies of financial data similar results with Hurst exponents smaller than 0{.}2.

The data sets from the Oxford-Man Institute used for illustrations in \cite{roughvola} are unfortunately not available any more. We replicate the same behavior of statistics $m(q,\Delta)$ as in Figures 5-7 of \cite{roughvola} for a time series of 7021 quasi maximum likelihood daily volatility estimates from 1996 to 2023, based on the method by \cite{xiu}, inferred from intra-day high-frequency trade prices of the S\&P 500 market ETF. The data is constructed from the Risk Lab on Dacheng Xiu's website\,\footnote{\url{https://dachxiu.chicagobooth.edu/\#risklab}} and the S\&P 500 is certainly a very relevant financial index. It is not important if we insert $\sigma_{j\Delta}^2$, or a square root $\sigma_{j\Delta}$, in \eqref{mq}. The definition without square is taken from \cite{roughvola}, but we insert the estimates of squared volatility. The left plot in Figure \ref{Fig:Rough} illustrates the linear regressions for $q=j/2,1\le j\le 4$. The points give the computed statistics. The statistics $m(q,l\Delta)$, $1\le l\le 100$, look as a function of $l$ indeed almost perfectly logarithmic. This is confirmed by the good fit of the linear functions in the plot. The right-hand side of Figure \ref{Fig:Rough} compares the estimated slope, called $\zeta_q$ in \cite{roughvola}, along different values of $q$. From this illustration, we see the estimate $\hat H\approx 0{.16}$ for this data. Again, we find empirical evidence for a small Hurst exponent fitting a fBm to the log-volatilities. Moreover, our data shows pronounced negative empirical autocorrelations which further indicates small Hurst exponents and would contradict large ones.

This new rough volatility paradigm already stimulated a considerable body of research, beyond the high-frequency literature, for instance, on financial implications, in \cite{roughf1} and \cite{roughf}. The main motivation from econometrics to use this model is that it facilitates improved volatility forecasting, see \cite{forecast}, among others. Having a Gaussian process, optimal prediction is feasible and given by conditional expectation. While the puzzle of rough volatility vs.\ volatility persistence is now to a large extent -- but not yet fully -- understood, forecasting mainly exploits a correlation structure. From this point of view, large Hurst exponents and very small ones could both favour a similar good performance of prediction, while the opposite is the case for values close to 1/2. The application of rough volatility for forecasting uses the continuous-time model rather as a substitute of time series models over longer periods, where the latency of volatility is less crucial than within the framework of intra-daily high-frequency observations. The question if we can infer the Hurst exponent, or more general the regularity $\alpha$ from Assumption \ref{volass}, based on observations of the log-price $(X_{j\Delta_n})$ is nevertheless of great theoretical interest. Given its crucial role in spot volatility estimation in Section \ref{sec:2}, it is moreover practically relevant. 

It is known from \cite{Mathieu} that, based on $(n+1)$ high-frequency observations, the Hurst exponent $\alpha$ of the latent volatility can be estimated with optimal rate $n^{-1/(4\alpha+2)}$, if $\alpha>1/2$, exploiting results from \cite{gloterhoffmann}. The important question for rough volatility, if this is true also in case that $\alpha<1/2$, is confirmed in the recent work \cite{chong2}. Estimation methods and asymptotic confidence are furthermore established in the companion work \cite{chong1}. This is shown for models in that the log-volatility follows a fractional process of similar nature as fBm. The Hurst exponent $\alpha$ is in this case not only the regularity from Assumption \ref{volass}, but determines also the inter-dependence structure (persistence) and more. In joint work with Moritz Jirak, we are interested in the question, if the regularity $\alpha$ can be identified from high-frequency log-prices $(X_{j\Delta_n})$ in the more general case. Since for direct observations, the rates are the same, and most estimators for the Hurst exponent in this framework are in fact constructed to assess the regularity, this could be expected. However, we obtain a rather negative result with the following lower bound.  We impose regularity $\alpha$ in \eqref{volass2} and that the process exploits this regularity in the sense of a lower and an upper bound. It is clear that only the upper bound from Assumption \ref{volass} is not a suitable condition when we aim to estimate $\alpha$, since, e.g., constant functions satisfy this for any $\alpha$. 

\begin{theo}\label{theolower}
Suppose that positive constants $c_{\sigma}$ and $ C_{\sigma}$ exist, such that for $s,t\ge 0$:
\begin{align}\label{volass2}c_{\sigma} s^{2\alpha}\le \E\big[(\sigma_{(t+s)}-\sigma_{t})^2\big]  \le C_{\sigma} s^{2\alpha}\,.\end{align}
The minimax lower bound for estimation of $\alpha$ is determined by
\[ \exists \delta>0:\;\liminf_{n\to\infty}\inf_{\hat\alpha_n}\max_{\alpha\in\{\alpha_0,\alpha_0+\delta
r_n\}}
{\mathbb{P}_{\hspace*{-.1em}\alpha_0}}\big(|\hat\alpha_n-\alpha_0|\ge  \delta r_n\big)>0\,,
\]
with $r_n=(n^{-1/2+2\alpha_0})/\log(n)$. That is, for any sequence of estimators $\hat\alpha_n$ of the true parameter $\alpha_0$, $r_n$ gives a lower bound on the rate with that the minimax risk decreases in $n$.
\end{theo}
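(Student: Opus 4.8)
The plan is a two--point (Le~Cam) argument, preceded by a reduction that converts the multiplicative observational noise into additive i.i.d.\ noise of constant order. I would first restrict to the admissible sub--model with zero drift and no jumps, $X_t=\int_0^t\sigma_s\,\text{d}W_s$. It then suffices to construct two volatility processes $\sigma^{(0)}$ and $\sigma^{(1)}$, of regularities $\alpha_0$ and $\alpha_1=\alpha_0+2\delta r_n$, both satisfying \eqref{volass2} with the \emph{same} constants $c_{\sigma},C_{\sigma}$, for which the laws $\P_{\alpha_0},\P_{\alpha_1}$ of the observations $(X_{j\Delta_n})_{0\le j\le n}$ satisfy $\limsup_{n}\|\P_{\alpha_0}-\P_{\alpha_1}\|_{\mathrm{TV}}<1$. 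Le~Cam's two--point inequality then gives $\inf_{\hat\alpha_n}\max_{i\in\{0,1\}}\P_{\alpha_i}\big(|\hat\alpha_n-\alpha_i|\ge\delta r_n\big)\ge\tfrac12\big(1-\|\P_{\alpha_0}-\P_{\alpha_1}\|_{\mathrm{TV}}\big)>0$, since no $\hat\alpha_n$ can lie within $\delta r_n$ of both parameter values; this is the displayed minimax lower bound up to an innocuous relabelling.

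\emph{Reduction of the experiment.} I build $\sigma^{(i)}$ from a Gaussian source independent of $W$, so conditionally on its path $\Delta_j^nX=\sqrt{v_j^{(i)}}\,Z_j$ with $v_j^{(i)}=\int_{t_{j-1}}^{t_j}(\sigma^{(i)}_s)^2\,\text{d}s$ and $(Z_j)$ i.i.d.\ $\mathcal N(0,1)$ independent of $\sigma^{(i)}$; the signs of the increments are thus ancillary and $\big((\Delta_j^nX)^2\big)_{1\le j\le n}$ is sufficient for $\alpha$. Passing to logarithms, the experiment is equivalent to observing
\[Y_j=\log v_j^{(i)}+\xi_j,\qquad 1\le j\le n,\]
with $\xi_j=\log Z_j^2$ i.i.d.\ of a fixed law and $\sigma_\xi^2:=\var(\xi_1)\in(0,\infty)$. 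The decisive point is that this noise is additive, independent, and of \emph{constant} order, while the increments of the latent process $\log v^{(i)}$ over the resolvable scale $\Delta_n$ are only of order $\Delta_n^{\alpha_i}=n^{-\alpha_i}$: the data are a coarse, heavily corrupted view of the volatility path, which is what slows the rate.

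\emph{Construction and divergence bound.} Set $\sigma^{(i)}_t=\Phi(G^{(i)}_t)$ with $\Phi$ fixed, smooth, bounded and bounded below by a positive constant together with its derivatives (so $\inf_{t\in[0,1]}\sigma^{(i)}_t>0$ a.s.\ and $\E[(\sigma^{(i)}_{t+s}-\sigma^{(i)}_t)^2]\asymp\E[(G^{(i)}_{t+s}-G^{(i)}_t)^2]$), and $G^{(i)}$ a centred Gaussian process with $G^{(i)}_0=0$, stationary increments, and increment variance $v_i(s)=\E[(G^{(i)}_{t+s}-G^{(i)}_t)^2]$ chosen so that (a) $c\,s^{2\alpha_i}\le v_i(s)\le C\,s^{2\alpha_i}$ for all $s\in[0,1]$, with constants uniform in $i$ and $n$, and (b) $v_0\equiv v_1$ on $[\Delta_n,1]$, the two differing only on $(0,\Delta_n)$ where $v_1$ carries the marginally larger exponent. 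Such $v_i$ are readily exhibited (for instance as Gaussian moving averages whose kernel switches exponent at lag $\Delta_n$), and for $\alpha_0<1/4$ one has $n^{\pm 2(\alpha_1-\alpha_0)}=e^{\pm O(n^{-1/2+2\alpha_0})}\to1$, so grafting an $s^{2\alpha_1}$--branch onto an $s^{2\alpha_0}$--branch at $s=\Delta_n$ preserves the two--sided bound \eqref{volass2} with a single pair of constants. Since $v_0,v_1$ agree on all scales $\ge\Delta_n$, the sampled vectors $(G^{(0)}_{t_j})_j$ and $(G^{(1)}_{t_j})_j$ have identical laws and may be coupled to coincide on the grid, so the two models differ only through the sub--cell fluctuations of $\sigma^{(i)}$. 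Computing $\E[\log v_j^{(0)}-\log v_j^{(1)}]$ through $\int_0^{\Delta_n}v_i(u)\,\text{d}u$ shows these differ by a \emph{deterministic} per--coordinate bias of order $\rho_n\,n^{-2\alpha_0}$, $\rho_n:=\alpha_1-\alpha_0$, whereas the stochastic remainder is, in law, essentially common to both hypotheses. Bounding $\|\P_{\alpha_0}-\P_{\alpha_1}\|_{\mathrm{TV}}$ by inserting an intermediate measure carrying the $\alpha_0$--bias but the common fluctuations reduces matters to a Gaussian location comparison with a mean shift of order $\rho_n n^{-2\alpha_0}$ in each of $n$ coordinates against noise $\sigma_\xi$; by Pinsker's inequality this is $\lesssim\sigma_\xi^{-2}\,n\,(\rho_n n^{-2\alpha_0})^2=\sigma_\xi^{-2}\,\rho_n^2\,n^{1-4\alpha_0}$ plus lower--order terms from the fluctuation--law mismatch. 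With $\rho_n=2\delta r_n$ and $r_n=n^{-1/2+2\alpha_0}/\log n$ this is $O\big((\delta/\log n)^2\big)=o(1)$, so $\|\P_{\alpha_0}-\P_{\alpha_1}\|_{\mathrm{TV}}\to0$ and any fixed $\delta>0$ works.

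\emph{Main obstacle.} The difficulty is concentrated in the construction together with the divergence estimate. One must (i) exhibit two processes realising genuinely distinct two--sided Hölder exponents in the sense of \eqref{volass2} with a \emph{single} pair of constants, which forces the whole discrepancy below the sampling scale $\Delta_n$ so that the grid never sees it directly; and (ii) avoid bounding the divergence by the crude $\sigma_\xi^{-2}\E\|\log v^{(0)}-\log v^{(1)}\|^2$ under the natural coupling — for $\alpha_0<1/4$ this quantity does not stay bounded — and instead isolate, through the intermediate measure, the small deterministic bias in the aggregated sub--cell volatility, which is the only statistically relevant difference between the two experiments and which pins the rate at $n^{-1/2+2\alpha_0}/\log n$.
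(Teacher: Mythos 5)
Your overall strategy -- a two-point Le Cam/KL argument in a no-drift, no-jump sub-model -- is the same kind of argument as the paper's, but your construction is genuinely different: the paper sidesteps any coupling of latent paths by working with the much cruder sub-model $Y_j=(1+\Delta^{\alpha}U_j)Z_j$ with i.i.d.\ $U_j$, where $\alpha$ enters only through the amplitude $\Delta^\alpha$ of i.i.d.\ volatility fluctuations, the observations are i.i.d., and the $\chi^2$-divergence is computed explicitly by a series expansion of the density, then converted to KL and fed into Tsybakov's two-point theorem. Your route (two Gaussian-subordinated volatilities whose variograms agree on $[\Delta_n,1]$ and differ only below the sampling scale, plus the log-of-squared-increments reduction) is attractive because the hypotheses are honest volatility processes satisfying \eqref{volass2}, but it is also where your gaps sit.

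The concrete problems are these. First, the central quantitative step is off by a logarithm: the per-coordinate discrepancy in $\E[\log v_j]$ is driven by $\Delta_n^{2\alpha_0}-\Delta_n^{2\alpha_1}=\Delta_n^{2\alpha_0}\big(1-\Delta_n^{2\rho_n}\big)\approx 2\rho_n\log(n)\,n^{-2\alpha_0}$, not $\rho_n n^{-2\alpha_0}$. This $\log n$ is exactly the factor that forces the $1/\log n$ in $r_n$ (in the paper it appears as $1-\Delta^{2(\tilde\alpha-\alpha)}\approx 2\delta r_n\log(\Delta^{-1})$); with your estimate you would be proving the strictly stronger rate $n^{-1/2+2\alpha_0}$ without the logarithm. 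After correction, $n\,(\rho_n\log n)^2n^{-4\alpha_0}=O(\delta^2)$, so the KL is bounded but not $o(1)$: the theorem still follows (it only asserts existence of some $\delta>0$, and a bounded KL suffices in the two-point lemma), but your claims that the total variation distance tends to zero and that ``any fixed $\delta>0$ works'' do not survive. Second, the intermediate-measure step is not an argument yet: under your coupling of the grid skeleton, the sub-cell fluctuation of $\log v_j$ has size $n^{-\alpha_i}$ and its scale differs between the hypotheses at the same relative order $\rho_n\log n$, so the ``fluctuation-law mismatch'' contributes to the divergence at the \emph{same} order as the deterministic bias, not at lower order; moreover these fluctuations are dependent across cells and correlated with the grid values, so the reduction to a coordinatewise Pinsker bound needs a genuine conditional-KL argument rather than the assertion that the remainder is ``essentially common in law''. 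Third, the existence of two valid variograms that coincide on $[\Delta_n,1]$ while being two-sidedly comparable to $s^{2\alpha_0}$ and $s^{2\alpha_1}$ with a single pair of constants is plausible (the exponents differ only by $\rho_n\to0$) but is asserted, not constructed; since the whole statistical difference between your hypotheses lives below the sampling scale, this construction and the verification of \eqref{volass2} uniformly in $n$ must be written out. The paper's choice of sub-model is precisely a way of avoiding all three difficulties at the cost of a less ``realistic'' volatility.
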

 
The proof is provided in Section \ref{sec:7}. Lower bounds for minimax rates typically rely on statistical groundwork by \cite{Tsybakov2008}. We exploit techniques and results from \cite{Tsybakov2008} for the proof of Theorem \ref{theolower} and our construction mimics one used in \cite{BJV} for a related, different lower bound pertaining change-points of $\alpha$. In our model, different from the one of \cite{chong2}, $\alpha$ determines only the regularity. The proof of the lower bound utilizes a sub-model which does not have the dependence structure of fBm. Since lower bounds extend to supersets but not to subsets, it does not apply to the more specific model with a fBm and is hence not in conflict with the result from \cite{chong2}. In particular, we obtain $n^{-1/2+2\alpha}$, for $\alpha<1/4$, as a lower bound. This shows that a consistent estimator only exists for $\alpha_0<1/4$! For $\alpha_0$ close to zero we get close to the standard parametric rate $n^{-1/2}$. Both rates from \cite{chong2} and Theorem \ref{theolower} have in common that the rate hinges on the parameter and is better for smaller values. The comparison reveals that estimation of a latent volatility's \emph{regularity}, or the \emph{Hurst exponent} imposing a model with a fractional process, are in general different problems. We conclude that estimating the regularity is statistically more difficult.

Let me finish the section with a positive result. We use the stochastic Landau symbols. Assume that $\hat\alpha$ is consistent
\begin{align}\hat\alpha-\alpha=\mathcal{O}_{\P}\big(n^{-1/2}\Delta_n^{-2\alpha}\big)\,,\end{align}
with $n^{-1/2}\Delta_n^{-2\alpha}\to 0$, such that $(\hat\alpha-\alpha)\stackrel{\P}{\longrightarrow} 0$, and consider the spot volatility estimator \eqref{spotvola} with optimal $k_n\propto\Delta_n^{-\frac{2\alpha}{2\alpha+1}}$. Not knowing $\alpha$, we replace it by $\hat\alpha$, and call the resulting estimator $\hat\sigma_s^{2,ad}$. The elementary identities
\begin{align*}\frac{\hat\alpha}{2\hat\alpha+1}-\frac{\alpha}{2\alpha+1}=\frac{\hat\alpha-\alpha}{(2\alpha+1)(2\hat\alpha+1)}\,,\end{align*}
and
\begin{align*}\Delta_n^{\frac{\hat\alpha-\alpha}{(2\alpha+1)(2\hat\alpha+1)}}&=\exp\Big(\frac{\alpha-\hat\alpha}{(2\alpha+1)(2\hat\alpha+1)}\log\Delta_n^{-1}\Big)\\
&=1+\frac{\alpha-\hat\alpha}{(2\alpha+1)(2\hat\alpha+1)}\log\Delta_n^{-1}+\mathcal{O}_{\P}\Big(n^{-1}\Delta_n^{-4\alpha}(\log\Delta_n^{-1})^2\Big)\\
&=1+\KLEINO_{\P}(1)\,,\end{align*}
then yield with the results from Section \ref{sec:3} that for two random variables $Z_1$ and $Z_2$:
\begin{align*}
\hat\sigma_s^{2}-\hat\sigma_s^{2,ad}&=\sigma_s^2+Z_1 \,\Delta_n^{\frac{\alpha}{2\alpha+1}}-\sigma_s^2-Z_2\,\Delta_n^{\frac{\hat\alpha}{2\hat\alpha+1}}\\
&=\Delta_n^{\frac{\alpha}{2\alpha+1}}(Z_1-Z_2+\KLEINO_{\P}(1))=\mathcal{O}_{\P}\big(\Delta^{\frac{\alpha}{2\alpha+1}}\big)\,.
\end{align*}
We conclude that $\hat\sigma_s^{2,ad}$ attains the same optimal rate of convergence as the estimator which exploits known $\alpha$.
\section{Limit order microstructure noise\label{sec:5}}
\begin{figure}[t]
\begin{framed}
\hspace*{-.35cm}\includegraphics[width=14cm]{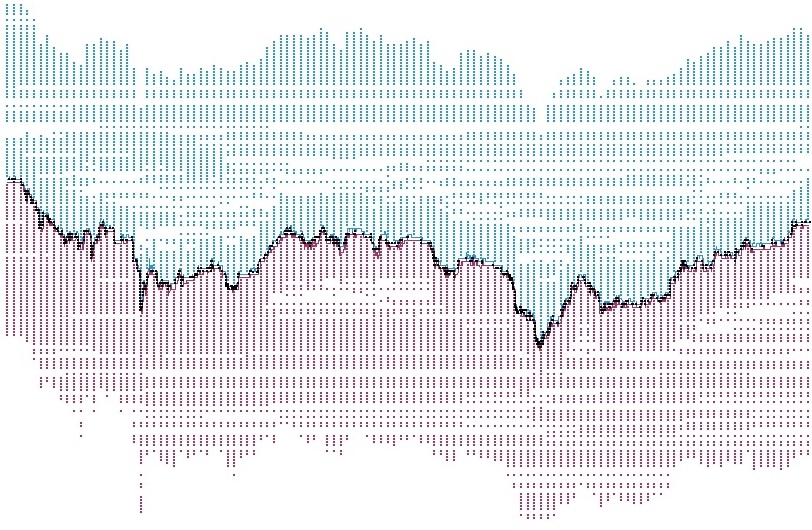}
\end{framed}
\caption{\label{Fig:LOMN}{\color{violet}{Bid}} and {\color{light-blue}{ask}} quotes and trade prices (black dots) for the Apple asset over a 10 minutes time interval.}
\end{figure}

While we modelled high-frequency log-prices so far as discretizations of continuous-time stochastic processes, when having available data from a limit order book, there is not only one price at some given time. Figure \ref{Fig:LOMN} gives a snapshot of price dynamics of the Apple asset traded at Nasdaq over a 10 minutes time interval. We use Nasdaq data from Lobster.\,\footnote{\url{lobsterdata.com}} A blue line shows the evolution of the best ask price, that is, the lowest price at which someone is willing to sell the asset. A red line shows the best bid price, that is, the highest price someone is offering to buy the asset. In between there is a bid-ask spread. The many points above the best ask illustrate many other active ask-limit orders and below the best bid active bid-limit orders. Trading usually takes places when market orders arrive with that someone buys or sells the asset at the best available price. These are executed against the available limit orders. For this reason trade prices bounce between the best ask and best bid what makes the illustration of all three in the same plot a bit overfraught. Trade prices are plotted in Figure \ref{Fig:LOMN} as black dots. A prevalent concept for market microstructure in financial econometrics is to assume some underlying efficient, semimartingale log-price process in an arbitrage-free market modelling longer-term price dynamics, while high-frequency observations are diluted by an additive market microstructure noise. Therefore, the observation model to account for market microstructure is
\begin{align}\label{noise}Y_i=X_{t_i}+\epsilon_i~,~0\le i\le n,\end{align}
with an It\^{o} semimartingale $(X_t)$ and noise $(\epsilon_i)$. Such a model was proposed in \cite{abdl00}, among others, for trade prices with regular noise and there is a vast area of research on this model. Classical regular {\textbf{M}}arket {\textbf{M}}icrostructure {\textbf{N}}oise ({\textbf{MMN}}) $(\epsilon_i)_{0\le i\le n}$ is i.i.d.\ with $\E[\epsilon_i]=0$. If a full limit order book is available, it is recently applied to mid quotes, i.e., averages of best bid and best ask quotes. If we model the prices of (best) ask quotes directly, a natural assumption is that they all lie above the efficient, semimartingale log-price $(X_t)$. Reasons are that ask orders will typically be submitted at prices above the level that is seen as current fair price to make money and they also lie above the trade prices. This leads us to a \emph{stochastic boundary model} with observations in the epigraph of a semimartingale boundary process. We hence use model \eqref{noise} with {\textbf{L}}imit {\textbf{O}}rder {\textbf{M}}icrostructure {\textbf{N}}oise ({\textbf{LOMN}}) which satisfies
\begin{align}\epsilon_i\stackrel{i.i.d.}{\sim}F_{\eta},\,\epsilon_i\ge 0\,,\end{align}
that is, {\textbf{L}}ower-bounded, {\textbf{O}}ne-sided {\textbf{M}}icrostructure {\textbf{N}}oise. The model was introduced in \cite{BJR}. We assume that $(\epsilon_i)_{0\le i\le n}$ is exogenous with a cdf 
\begin{align}\label{cdf}F_{\eta}(x) =\eta x\,\big(1+\KLEINO(1)\big) ,\;\mbox{as}~x\downarrow 0\,.\end{align}
Bid prices are analogously modelled with noise that is upper bounded and both combined in practice. Although Figure \ref{Fig:LOMN} shows prices in a discrete image space under a very high time resolution, it is standard to work with the real-valued process $(X_t)$, to perform estimation of the volatility, or other daily quantities. It is then natural to consider continuous noise distributions also. Since our methods use differences between local minima or maxima of the data only, it is not crucial that the boundary of the noise is exactly zero. It can be some unknown constant instead, or even a regular function over time, what is meaningful to include compensation of market processing costs. This possible generalization is one reason why we model ask and bid prices separately in boundary models, e.g., instead of considering noise on a bounded interval. Moreover, a model with noise on an interval would not simplify the statistical problem but rather complicate the situation. Condition \eqref{cdf} does not impose a parametric form of the noise. The assumed standard behaviour of the cdf close to the boundary is satisfied by many common distributions, as a uniform distribution on some interval $[0,A]$, $A>0$, an exponential distribution as we know from Section \ref{sec:3}, and a heavy-tailed (shifted) Pareto distribution. Nevertheless, we currently work on generalizations of the model to allow for some general tail index which is 1 in \eqref{cdf}. The irregular, non-negative noise leads to statistical inference based on {\emph{local minima}} instead of local averages which are used under regular noise in the literature. This is motivated by the problem of estimating boundary parameters in parametric statistics. We explain the key idea looking at the prominent example of the taxi problem. An important advantage of LOMN and using order statistics compared to MMN is that no conditions on the right tail of the noise distribution or on the existence of moments of the noise are required. 

In our stochastic boundary model we do of course not have a constant boundary to estimate as in the taxi problem, but want to recover a latent semimartingale boundary process. This situation is intricate, but -- although the approach appears venturous -- we approximate the boundary process locally constant over small time blocks. From the analogy to the taxi problem, it is then natural to estimate the efficient log-price locally by local block-wise minima
\begin{align*}m_{k,n}=\min_{i\in\mathcal{I}_k^n}Y_i\,,~\mathcal{I}_k^n=\{t_i^n: ~t_i^n \in (kh_n,(k+1)h_n)\},\,0\le k\le h_n^{-1}-1\,.
\end{align*}
Let us assume for simplicity equidistant observations again, $t_i^n=i/n$, and $h_n^{-1}\in\N$ being the sequence of number of blocks, and $nh_n\in\N$ the number of observations per block. In our asymptotic high-frequency regime, $h_n\to 0$, and $nh_n\to\infty$, as $n\to\infty$. There is a balanced regime, $h_n\propto n^{-2/3}$, in which the stochastic order of the minimal error over a block and the movement of the boundary process over a block are the same, as
\begin{align*}\min_{i\in\mathcal{I}_k^n}\epsilon_i=\mathcal{O}_{\P}\big( (nh_n)^{-1}\big),~\text{and}~\big(X_{(k+1)h_n}-X_{kh_n}\big)=\mathcal{O}_{\P}\big(h_n^{1/2}\big)\,.\end{align*}
Based on local minima in this balanced regime, a rate-optimal estimator of the integrated volatility has been established in \cite{BJR}. 

\begin{figure}[H]
{\colorbox{light-blue}{\parbox{0.98\linewidth}{
\textbf{Taxi problem}\\[-.3cm]

\begin{minipage}[l]{0.67\textwidth}
Imagine you go for a walk in New York City and notice a lot of the famous yellow cab taxis. You're wondering how many there are in total. 
Fortunately, the yellow cabs are labeled with consecutive integers on their engine covers. So you can note the numbers you see during your walk and then estimate the unknown maximal number based on your sample.\,\footnotemark[6]
\end{minipage}
\begin{minipage}[r]{0.28\textwidth}
\hspace*{.5cm}\includegraphics[width=3cm]{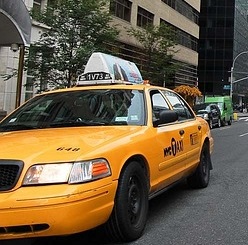}
\end{minipage}
\vspace*{.2cm}

Consider the similar problem of estimating the upper boundary $\theta$ from i.i.d.\ uniformly $U([0,\theta])$-distributed random variables $X_1,\ldots, X_n$ on the interval $[0,\theta]$. A very successful (though not in this example) and convenient construction for a point estimator is the method of moments: Since $X_1$ has expectation $\theta/2$, and the sample average $\bar x_n$ is a good estimator for an expectation, set $\hat\theta^{MM}=2\bar x_n$. This estimator is unbiased, converges almost surely to $\theta$, and satisfies a central limit theorem 
\[\sqrt{n}\big(\hat\theta^{MM}-\theta\big)\todl \mathcal{N}(0,\theta^2/3)\,,\] 
based on which asymptotic confidence intervals can be obtained. Looking at the likelihood \( L(\theta;x_1,\ldots,x_n)=\theta^{-n}{\1}\{\theta\in[x_{(n)},\infty)\}\), i.e., the product density as a function of $\theta$, however, tells statisticians that the maximum $X_{(n)}$ is a sufficient statistic. That means $X_{(n)}$ preserves all information about $\theta$ we have from $X_1,\ldots,X_n$. Therefore, during your walk you do not need to take notes with all observed numbers, but only remember the largest number that you observed. Based on the likelihood, we obtain the maximum likelihood estimator: $\hat\theta^{ML}=X_{(n)}$. We find that $n(\theta-X_{(n)})\todl \text{Exp}(\theta^{-1})$, since
\[\P\big(n(\theta-X_{(n)}\big)> x)=\P\big(X_{(n)}<\theta-x/n\big)=\big(1-x/(n\theta)\big)^n\to \exp(-x/\theta)\,.\]
Once more in this article we get an exponential limit distribution! The rate is much faster than for $\hat\theta^{MM}$, but $\hat\theta^{ML}<\theta$ is obviously biased.\\
Since $X_{(n)}$ is not only sufficient, but moreover complete, statisticians know that the associated Rao-Blackwell improvement of some unbiased $L_2$-estimator yields the unbiased estimator with uniformly smallest variance (umvu) by Lehmann-Scheff\'e. Since $\hat\theta^{MM}$ is unbiased, let us determine its Rao-Blackwell improvement: 
\[ \hat\theta=\E_\cdot[\hat\theta^{MM}\,|\,X_{(n)}]=\frac{2}{n}\sum_{i=1}^n\E_\cdot[X_i\,|\,X_{(n)}]=2\,\E_\cdot[X_1\,|\,X_{(n)}]\,.\,\footnotemark[7] 
\]
With probability $1/n$ we have $X_1=X_{(n)}$. Otherwise, the conditional distribution of $X_1$ given $X_{(n)}=x_{(n)}$ is uniform: $U([0,x_{(n)}])$. We hence obtain the umvu estimator \(\hat\theta=2(1/n+(n-1)/(2n))X_{(n)}=(n+1)X_{(n)}/n\).}}}
\end{figure}
\footnotetext[6]{The same problem was highly relevant in world war \RNum{2}, since the Germans stamped serial numbers on their tanks. From the observations on tanks that were captured or broke and left behind, British statisticians estimated the production rate of German tanks very precisely. Outside Germany the problem is since then known as the \href{https://en.wikipedia.org/wiki/German_tank_problem}{German tank problem.}}
\footnotetext[7]{The dot emphasizes that $\E_\cdot[\,-\,\,|\,X_{(n)}]$ does not depend on $\theta$.}

Like the estimators based on the maximum in the taxi problem converge faster than with the standard rate, \cite{BJR} proved that their estimator attains an optimal rate $n^{-1/3}$, with that the root mean squared error tends to zero, which improves upon the well-known standard rate $n^{-1/4}$ for regular noise. Lower bounds for the rate and the asymptotic variance under regular noise in the parametric case were established by \cite{gloter}. 
The distribution of local minima in the balanced regime is, however, involved which yet limited available results. In particular, in \cite{BJR} we could not provide asymptotic confidence for the integrated volatility. The article \cite{bibinger2022} contributes a step forward in this direction and extends the probabilistic theory required to work with the boundary model. For the tail function of local minima, we conclude with conditioning, \eqref{noise}, \eqref{cdf}, a Taylor expansion and dominated convergence that
\begin{align*}
&\P\Big(h_n^{-1/2}\big(m_{k,n}-X_{kh_n}\big)\Big)>x\sigma_{kh_n}\Big)\\
~&=\E\bigg[\exp\Big(\sum_{i=knh_n+1}^{(k+1)nh_n}\log\big(1-F_{\eta}\big(h_n^{1/2}\sigma_{kh_n}\big(x-h_n^{-1/2}(W_{i/n}-W_{kh_n})\big)\big)\big)\Big)\bigg]\\
~&=\E\Big[\exp\Big(-nh_n^{3/2}\sigma_{kh_n}\eta\int_0^1(B_t-x)_{-}\,\text{d}t\,(1+\KLEINO(1))\Big)\Big]
\end{align*}
for all $x<0$, with a standard Brownian motion $(B_t)$. To work with the integrated negative part of a Brownian motion in the last expression, we exploit and extend results about \emph{local time} of Brownian motion. One main ingredient of the asymptotic analysis in \cite{bibinger2022} is an expansion of this tail function based on a generalized arcsine law. Here, we focus on a simpler idea which is nevertheless the most important step to approximate the distribution of the local minima. Selecting blocks slightly larger than in the balanced regime, we have $nh_n^{3/2} \to\infty$ in the exponent, such that the probability tends to zero unless the integral yields zero. This is the case if and only if the event $\{\min_{0\le t\le 1} B_t\ge x\}$ occurs. In this regime, we hence obtain that
\begin{align*}
&\P\Big(h_n^{-1/2}\big(m_{k,n}-X_{kh_n}\big)\Big)>x\sigma_{kh_n}\Big)\\
~&=\P\big(\min_{0\le t\le 1} B_t\ge x\big)+\KLEINO(1)\,.
\end{align*}
The distribution of the minimum of a Brownian motion over the interval $[0,1]$ is remarkably simple. This is due to the \emph{reflection principle} connected with the strong Markov property of $(B_t)$. We derive with the reflection principle from the above approximation that for $x<0$, since $\P\big(\min_{0\le t\le 1} B_t\ge x\big)=\P\big(|B_1|\ge -x\big)$, that 
\[-h_n^{-1/2}\big(m_{k,n}-X_{kh_n}\big)\stackrel{st}{\longrightarrow} H\negthinspace M \negthinspace N(0,\sigma_{kh_n}^2)\,,\]
as $nh_n^{3/2}\to \infty$. The distribution of $|Z|$, for $Z\sim\mathcal{N}(0,1)$, is called \emph{half-normal distribution}. Since our limit is distributed as the product $\sigma_{kh_n}|Z|$ then, we call it mixed half-normal.

For volatility estimation, with $(B_t)$ and $(\tilde B_t)$ two independent standard Brownian motions, define
\begin{align}\label{psi}\Psi_{\negthinspace n}(\sigma^2)=h_n^{-1} \E\Big[\Big(\min_{i\in\{0,\ldots,nh_n-1\}}\big(\sigma B_{\frac{i}{n}}+\epsilon_i\big)-\min_{i\in\{1,\ldots,nh_n\}}\big(\sigma \tilde B_{\frac{i}{n}}+\epsilon_i\big)\Big)^2\Big].\end{align}

\begin{figure}[H]
{\colorbox{light-blue}{\parbox{0.98\linewidth}{
\textbf{Reflection principle}\\[-.5cm]

\begin{minipage}[l]{0.5\textwidth}
Based on the symmetry of the normal distribution and the independent normal increments of Brownian motion, many nice and explicit formulas for functionals of Brownian motion are available. An important tool to deduce such formulas is the following symmetry argument known as reflection principle.
\end{minipage}
\begin{minipage}[r]{0.41\textwidth}
\hspace*{.5cm}\includegraphics[width=4.75cm]{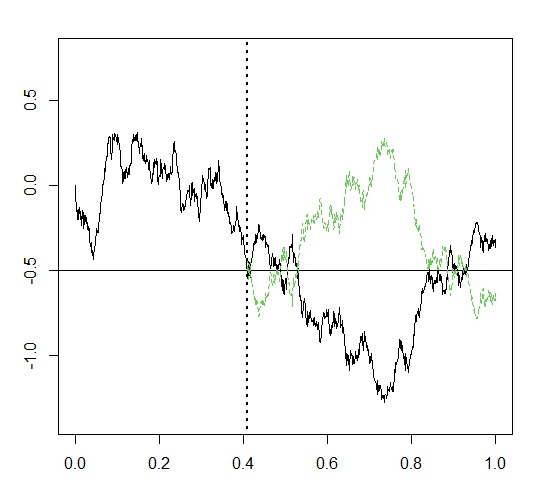}
\end{minipage}
\vspace*{.2cm}

Denote with $T_x$ the first entry time of the Brownian motion in some $x<0$, which is a \emph{stopping time}. For $x=-0{.}5$, it is sketched by the vertical dashed line in the plot. We define the \emph{reflected process} as
\[B_t^R=\begin{cases}B_t,&0\le t\le T_x\\ 2B_{T_x}-B_t=2x-B_t,& t\ge T_x\end{cases}\,.\]
In the plot, we see a simulated path of $(B_t)$ drawn as a solid line. The reflected process shares the same path up to time $T_x$, while after time $T_x$ it takes the reflected path given by the dashed green line. By the strong Markov property the process $(B_{T_x+s}-B_{T_x})_{s\ge 0}$ is a standard Brownian motion which is independent from the past, such that the \emph{reflected process is again a standard Brownian motion}. This is very useful, since whether or not the pathwise minimum moves below some level $x<0$ is determined from the terminal values $B_1$ and $B_1^R$: 
\begin{align*}\P\big(\min_{0\le t\le 1} B_t\le x\big)&=\P\big(\min_{0\le t\le 1} B_t\le x,B_1>x)+\P\big(B_1\le x\big)\\
&=\P(B_1^{R}\le x)+\P\big(B_1\le x\big)=2\,\P\big(B_1\le x\big)=2\,\Phi(x)\,,
\end{align*}
with $\Phi$ the cdf of the standard normal distribution of $B_1$. We exploit the reflection principle once more to even access the \emph{joint distribution of terminal value and minimum}. For $x\le 0$, and $x\le y$, we have
\begin{align*}
\P\big(\min_{0\le t\le 1} B_t\le x,B_1\le y\big)&=\P\big(\min_{0\le t\le 1} B_t\le x\big)-\P\big(\min_{0\le t\le 1} B_t\le x,B_1\ge y\big)\\
&=2\Phi(x)-\P\big(\min_{0\le t\le 1} B_t\le x,B_1\ge y\big)\\
&=2\Phi(x)-\P\big(T_x\le 1,B_1\ge y\big)\\
&=2\Phi(x)-\P\big(T_x\le 1,B_1-B_{T_x}\ge y-x\big)\\
&=2\Phi(x)-\P\big(T_x\le 1,B_1^R-B_{T_x}^R\ge x-y\big)\\
&=2\Phi(x)-\P\big(T_x\le 1,B_1^R\ge 2x-y\big)\\
&=2\Phi(x)-\P\big(T_x\le 1,B_1\ge 2x-y\big)\\
&=2\Phi(x)-\Phi(2x-y)\,,
\end{align*}
where we conclude the last line since $(2x-y)\le x$.
}}}
\end{figure}

\enlargethispage*{1cm}
From the moments of the half-normal distribution, as $nh_n^{3/2}\to \infty$, we obtain that
\begin{align}\Psi_{\negthinspace n}(\sigma^2)=\frac{2(\pi-2)}{\pi}\sigma^2+\KLEINO(1)\,.\end{align}
Not having any lower bound for the integrated negative part in the above expression, the remainder decays however slowly in $n$, and we require an asymptotic expansion and a numerical approximation of $\Psi_{\negthinspace n}$ for an estimator with desirable properties.
Nevertheless, with $K_n\to\infty$, we first consider the simple estimator 
\begin{align}\hat\sigma^2_{\tau}=\frac{\pi}{2(\pi-2)K_n}\sum_{k=(\lfloor h_n^{-1}\tau\rfloor-K_n)\vee 1}^{\lfloor h_n^{-1}\tau\rfloor-1}h_n^{-1}\big(m_{k,n}-m_{k-1,n})^2\,,\end{align}
in case without price jumps and a truncated version which is robust to nuisance jumps. A main result of \cite{bibinger2022} is that under Assumption \ref{volass} for $K_n=C_K h_n^{\delta -2\alpha/(1+2\alpha)}$, with some constants $C_K$ and $\delta$, $0<\delta<2\alpha/(1+2\alpha)$, the estimator satisfies the stable clt
\begin{align}K_n^{1/2}\Big(\hat\sigma^2_{\tau}-\frac{\pi}{2(\pi-2)}\Psi_{\negthinspace n}\big(\sigma_{\tau}^2\big)\Big) \stackrel{st}{\longrightarrow} \mathcal{N}\Big(0,\frac{7\pi^2/4-2\pi/3-12}{(\pi-2)^2}\sigma^4_{\tau}\Big)\,.\end{align}
In fact, we use the (approximated) function $\Psi_{\negthinspace n}$ for a bias correction to obtain a clt at optimal rate. The asymptotic variance is derived with the expansion of the tail function based on the joint distribution of minimum and terminal value of a Brownian motion over $[0,1]$ concluded from the reflection principle. To this end we use one of the most important examples for applications of Fubini-Tonelli in probability that relates moments and the tail function: For some non-negative random variable $Z$, with distribution $\P_Z$, and $k\in\N$, it holds true that
\begin{align*}
\E[Z^k]=\int_0^{\infty}z^k\,\text{d}\P_Z(z)&=\int_0^{\infty}\Big(\int_0^z ky^{k-1}\text{d}y\Big)\,\text{d}\P_Z(z)\\
&=k\int_0^{\infty}y^{k-1}\int_y^{\infty}\text{d}\P_Z(z)\,\text{d}y=k\int_0^{\infty}y^{k-1}\P(Z>y)\text{d}y\,.
\end{align*}
Integration with respect to the $\sigma$-finite probability and Lebesgue measures is exchanged here. Extensions to covariances and real-valued random variables are available and allow us to use the form of the tail function from above.

In a recent preprint \cite{bibinger2024jump}, we develop jump detection methods under LOMN including a Gumbel test for jumps. It is based on
\[T^{BHR}=\max_{k=1,\ldots,h_n^{-1}-1}\Big|\frac{m_{k,n}-m_{k-1,n}}{\big(\hat\sigma_{kh_n}^{2}\big)^{1/2}}\Big|.\]
Based on extreme value theory, we show that under the null hypothesis 
\[H_0:\sup_{\tau\in [0,1]}|\Delta X_{\tau}|=0\] 
and for $(\sigma_t)\in C^{\alpha}$, i.e., Hölder continuous with regularity $\alpha$, it holds with $h_n=2\log(2h_n^{-1}-2)n^{-2/3}$ that
\begin{align}\label{testgumbel}n^{1/3}\,T^{BHR}-2\log(2h_n^{-1}-2)+\log\big(\pi\log(2h_n^{-1}-2)\big)\todl\Lambda\,,\end{align}
with $\Lambda$ the standard Gumbel distribution. Under local alternatives

\[H_1:\liminf_{n\to\infty}{\dbl{n^{\beta}}}\sup_{\tau\in(0,1)}|\Delta X_{\tau}|>0,~\mbox{for some}~{\dbl{\beta<1/3}}\,,\]
the test satisfies with $q_{1-a}^{\thinspace\Lambda}=-\log\left(-\log\left(1-a\right)\right)$ that
\begin{align}\lim_{n\to\infty}\P_{H_1}\Big(n^{1/3}\,T^{BHR}-B_n>q_{1-a}^{\thinspace\Lambda}\Big)=1\,.\end{align}
The subscript of the measure is to indicate that we are under $H_1$, and the path has at least one jump. Considering local alternatives, the question about which probability space(s) to work on is justified, but not particularly important here, since we can simply consider the distributions of the statistics directly to avoid an arduous construction.

The main insight of this result is that under LOMN {\emph{smaller jumps can be identified}} compared to MMN. While we can detect jumps of size larger than $n^{-1/3}$, only jumps of size larger than $n^{-1/4}$ can be found under MMN. Moreover, working with order statistics to infer jumps has some nice advantages compared to local averages under MMN, where averaging over jump times is creating huge problems described as ``pulverisation of jumps by pre-averages'' by \cite{zhangmykland3}. This is illustrated in Section 2 of \cite{bibinger2024jump}. One ingredient to show \eqref{testgumbel} is uniform consistency of the spot volatility estimation, for which we require the continuity of $(\sigma_t)$ under $H_0$. Furthermore, the precise Gumbel convergence for differences between half-normal random variables is determined, since we cannot trace this one back to a standard example of extreme value theory. Our sequence is furthermore not i.i.d., but it is known that Gumbel convergences of maxima hold analogously more generally under weak dependence conditions. 

\section{Outlook\label{sec:6}}
In the multi-dimensional framework with a portfolio of $d$ stocks, the volatility process becomes $(d\times d)$ matrix-valued. The key role for risk diversification is rather taken by the covariances than the idiosyncratic volatilities. A \emph{co-jump} pattern is of interest to separate idiosyncratic and systemic effects, see e.g., \cite{caporin}. Since the estimation uncertainty of a $(d\times d)$ matrix increases proportional to $d^4$ in the dimension, we have our very own curse of dimensionality. 

Multivariate ultra high-frequency data are not only subject to market microstructure, but discrete observations moreover arrive at \emph{non-synchronous} times. Volatility matrix estimation under these peculiarities motivated another strand of research. In \cite{BHMR} we contributed two main insights:
\begin{enumerate}
\item Different than for non-noisy observations, non-synchronicity effects are at first order asymptotically negligible. In a combination with noise, the noise prevails.
\item A lower bound for the asymptotic variance-covariance structure of volatility matrix estimation reveals that the multivariate model allows improved estimates, also of idiosyncratic volatilities.
\end{enumerate}
The first result is based on an asymptotic equivalence between a continuous-time and the discrete-time observation model. Asymptotically equivalent experiments provide the same amount of information about unknown quantities, which hence can be estimated with the same precision in both situations. If one model is simpler than another one or already well explored, this is very useful, also since statistical methods can be transferred. The effect of efficiency gains from a multivariate model for the estimation of a single volatility arises when assets are correlated and observed with uncorrelated noise. 

The picture on boundary estimation sketched by the taxi problem is yet incomplete, since the rate of convergence heavily depends on the behaviour of the cdf close to the boundary. For instance, for a triangle distribution which is the convolution square of the uniform distribution, the rate is only $\sqrt{n}$ instead of $n$. Extending the model with a general tail index and its estimation allow to better calibrate boundary models to limit order quotes. Our first empirical trials indicate that different assets might show different tail behaviours, what is particularly interesting in view of their strong correlations and since a small tail index results in a higher accuracy of volatility estimation. This is a strong motivation to develop a multivariate observation model with limit order microstructure noise. When the estimation uncertainty varies across different stocks, a risk analysis for one stock, e.g., Apple, could be improved using data also from another stock, e.g., Google. Compared to multivariate regular noise, efficiency gains become even more relevant affecting the rates of convergence and not only minimal asymptotic variances at optimal rate.

Forecasts of financial risk can improve considerably when going from a model for a single stock price to a multivariate model, e.g., this was the case for the multivariate GARCH model proposed in \cite{bollerslev1988capital}. Consequently, if rough volatility provides accurate forecasts it should be further extended to a multivariate model. This should include possibly different Hurst exponents, what is mathematically challenging. Therefore, it is also of theoretical interest for mathematicians.

Currently, the analysis of \emph{high-dimensional high-frequency data} is a vibrant research area. This refers to an asymptotic regime in that not only $n\to\infty$, but moreover $d\to\infty$ is considered for an asymptotic expansion. In this area, high-frequency statistics is combined with methods from high-dimensional statistics, e.g., LASSO, penalization in general, shrinkage estimation, thresholding eigenvalues, principal component analysis and sparsity, see e.g., \cite{principal}, \cite{pelger}, \cite{chen2020five}, \cite{ledoit} and \cite{christensen}. In view of strong correlations between most financial assets, {\emph{factor models}} appear to be very attractive. These are of the form  
\[dX_t=B_t^q\,dF_t+dZ_t~,[F,Z]\equiv 0,~\Sigma_t=B_t^q\Sigma_t^{S}(B_t^q)^{\top}+\Sigma_t^{I}\,,\]
where the $q$ factors $F_t$ affect all stocks, with $B_t^q\in\R^{d\times q}, \Sigma^{S}_t\in\R^{q\times q}$. The dimension $q$ is kept fix as $d\to\infty$. The estimation of all components of the model is challenging. Moreover, the rank $q$ has a crucial role and we are interested in testing constant rank and detecting changes of $q$ over time. The precision matrix, the inverse of the (integrated) volatility matrix, is the most important object for optimal portfolio allocation. \cite{cai2020high} focusses on its estimation from high-dimensional high-frequency data. Assuming equidistant observations with regular microstructure noise, however, there is something left to improve upon in future research.

\section{Proof of Theorem 2\label{sec:7}}
It suffices to prove that $r_n$ is a lower bound for a specific sub-model contained in our general model, since the lower bound then extends to the general model. A simplified sub-model that preserves the main structure of the estimation problem is the estimation of $\alpha\in(0,1]$ from observations
\begin{align}\label{obs_lb}Y_j=\big(1+\Delta^{\alpha}U_j\big)Z_j~,~Z_j\stackrel{iid}{\sim}\mathcal{N}(0,1)\,,1\le j\le n=\Delta^{-1} T,\end{align}
where $(U_j)$ are i.i.d.\ real-valued random variables with a symmetric centered law, $\E[U_1^{2k-1}]=0,k\in\N$, independent of $(Z_j)_{j\ge 1}$, for which all moments exist. Assuming that the law of $U_1$ has a Lebesgue density $g_U$, we obtain by conditioning the following density of $\Pa$ with respect to the Lebesgue measure $\lambda$:
\begin{align*}
 \frac{\text{d}\Pa}{\text{d}\lambda}(x)\hspace*{-0.05cm}&=\int_{\R}\frac{1}{\sqrt{2\pi}}\frac{1}{1+\Delta^{\alpha}u}\exp\Big(\frac{-x^2}{2(1+\Delta^{\alpha}u)^2}\Big)g_U(u)\,\text{d}u  \\
&=\frac{1}{\sqrt{2\pi}}\int_{\R}\exp\Big(-\log\big(1+\Delta^{\alpha}u\big)-\frac{x^2}{2(1+\Delta^{\alpha}u)^2}\Big)g_U(u)\,\text{d}u \\
 &=\frac{1}{\sqrt{2\pi}}\int_{\R}\exp\Big(\sum_{k=1}^{\infty}(-1)^k\frac{(\Delta^{\alpha}u)^k}{k}-x^2\sum_{k=0}^{\infty}(-1)^k{(\Delta^{\alpha}u)^k}\frac{k+1}{2}\Big)g_U(u)\,\text{d}u \\
 &=\frac{1}{\sqrt{2\pi}}e^{-\frac{x^2}{2}}\int_{\R}\exp\Big(\sum_{k=1}^{\infty}(-1)^k(\Delta^{\alpha}u)^k\Big(\frac{1}{k}-\frac{k+1}{2}x^2\Big)\Big)g_U(u)\,\text{d}u \\
&=\frac{1}{\sqrt{2\pi}}e^{-\frac{x^2}{2}}\Big(1+\sum_{k\ge 1}C_{2k}(x)(\Delta^{\alpha})^{2k}\Big)\,,
\end{align*}
where addends for odd $k$ vanish by the symmetry of the law of $U_1$, and the coefficients of the power series are degree $2k$ polynomials in $x$, with
\begin{align*}
C_2(x)&=\E[U_1^4]\Big(\frac{x^4}{2}-\frac{5}{2}x^2+1\Big)\,,\\
C_4(x)&=\E[U_1^8]\Big(\frac{x^8}{24}-\frac{11}{12}x^6+\frac{41}{8}x^4-7x^2+1\Big)\,.
\end{align*}
Naturally, the first addend of $\text{d}\Pa/\text{d}\lambda(x)$ yields the standard normal density. We see that it holds that
\[\int_{\R}\frac{1}{\sqrt{2\pi}}e^{-\frac{x^2}{2}}\sum_{k\ge 1}C_{2k}(x)\,\text{d}x=0\,.\]
Note that $\int_{\R}C_{2k}(x)e^{-\frac{x^2}{2}}\,\text{d}x=0$, for the coefficients $k=2,4$, can be seen inserting the moments of the standard normal distribution. For two parameters $\alpha,\tilde{\alpha}\in(0,1]$, consider the $\chi^2$-divergence
\begin{align*}
&\chi^2\big(\text{d}\Pa\|\text{d}\Paa)=\int\Big(\frac{\text{d}\Pa}{\text{d}\Paa}-1\Big)^2\text{d}\Paa=\int\Bigg(\frac{\frac{\text{d}\Pa}{\text{d}\lambda}-\frac{\text{d}\Paa}{\text{d}\lambda}}{\frac{\text{d}\Paa}{\text{d}\lambda}}\Bigg)^2 \frac{\text{d}\Paa}{\text{d}\lambda}\,\text{d}\lambda\\
&=\int_{\R}\bigg(\frac{\sum_{k\ge 1}C_{2k}(x)\big((\Delta^{\alpha})^{2k}\hspace*{-0.05cm}-\hspace*{-0.05cm}(\Delta^{\tilde\alpha})^{2k}\big)}{1\hspace*{-0.05cm}+\hspace*{-0.05cm}\sum_{k\ge 1}C_{2k}(x)(\Delta^{\tilde\alpha})^{2k}}\bigg)^2\frac{e^{-\frac{x^2}{2}}}{\sqrt{2\pi}}\Big(1\hspace*{-0.05cm}+\hspace*{-0.05cm}\sum_{k\ge 1}C_{2k}(x)(\Delta^{\tilde\alpha})^{2k}\Big)\hspace*{0.025cm}\text{d}x\,,
\end{align*}
which is one common measure of the distance between the probability measures $\Pa$ and $\Paa$. $\chi^2\big(\text{d}\Pa\|\text{d}\Paa)$ tends to zero when $\Delta\to 0$. In a high-frequency asymptotic regime, $\Delta\to0$, the last equation yields that
\begin{align*}\chi^2\big(\text{d}\Pa\|\text{d}\Paa)&=\Delta^{4\alpha}\big(1-\Delta^{2(\tilde{\alpha}-\alpha)}\big)^2\int_{\R}(C_2(x))^2 \frac{e^{-\frac{x^2}{2}}}{\sqrt{2\pi}}\,\text{d}x\;\big(1+\KLEINO(1)\big)\\
&=\frac{13}{2}\Delta^{4\alpha}\big(1-\Delta^{2(\tilde{\alpha}-\alpha)}\big)^2\big(1+\KLEINO(1)\big)\,.
\end{align*}
By \cite[Lemma 2.7]{Tsybakov2008} the Kullback-Leibler divergence ${\bf D}_{KL}(\Pa\|\Paa)$ is bounded by $\chi^2\big(\text{d}\Pa\|\text{d}\Paa)$, and by additivity of the Kullback-Leibler divergence for product measures, we obtain that
\begin{align}{\bf D}_{KL}\Big(\Pa^{\otimes n}\|\Paa^{\otimes n}\Big)\le n\,\chi^2\big(\text{d}\Pa\|\text{d}\Paa)\,.\end{align}
This yields for a high-frequency asymptotic regime that
\[{\bf D}_{KL}\Big(\Pa^{\otimes n}\|\Paa^{\otimes n}\Big)\le n\frac{13}{2}\Delta^{4\alpha}\big(1-\Delta^{2(\tilde{\alpha}-\alpha)}\big)^2\big(1+\KLEINO(1)\big)\,.\]
Considering for $\tilde{\alpha}$ the sequence $\tilde{\alpha}^{(n)}=\alpha+\delta r_n$, with a constant $\delta$ and a null sequence $(r_n)$, we obtain that
\begin{align*}{\bf D}_{KL}\Big(\Pa^{\otimes n}\|\mathbb{P}_{\hspace*{-.1em}\tilde\alpha^{(n)}}^{\otimes n}\Big)&\le n\frac{13}{2}\Delta^{4\alpha}\big(1-\exp\big(-2\delta r_n\log(\Delta^{-1})\big)\big)^2\big(1+\KLEINO(1)\big)\\
&=n\Delta^{4\alpha}\,26\delta^2 r_n^2\big(\log(\Delta^{-1})\big)^2\big(1+\KLEINO(1)\big)\,.\end{align*}
Setting $r_n=n^{-1/2}\Delta^{-2\alpha}/\log(\Delta^{-1})$, the Kullback-Leibler divergence is bounded by the finite constant $26\delta^2$, and we conclude by \cite[Theorem 2.2]{Tsybakov2008}. \qed

\bibliographystyle{chicago}
\bibliography{literature}
\end{document}